\documentclass[twocolumn,letterpaper]{IEEEAerospaceCLS}  

\usepackage{subcaption}
\usepackage[]{graphicx}    
\usepackage[colorlinks=true, urlcolor=blue, linkcolor=black, citecolor=black]{hyperref} 
\usepackage{amsmath} 
\usepackage{mathtools}
\usepackage{tabularx}
\usepackage{amsfonts}
\usepackage{bm}
\usepackage{cite}
\usepackage{xurl}
\usepackage{svg}

\newtheorem{theorem}{Theorem}[section]
\newtheorem{lemma}[theorem]{Lemma}

\newtheorem{corollary}{Corollary}

\newcommand{\ignore}[1]{}  
\begin{document}
\title{Probabilistic Connectivity Analysis of Recursive Satellite Release for Formation Initialization}

\author{%
Hideki Yoshikado\\ 
Interstellar Technologies Inc.,\\
Hiroo, Hokkaido 089-2113, Japan\\
hideki.yoshikado@istellartech.com
\and 
Yuta Takahashi\\
Interstellar Technologies Inc.,\\
Hiroo, Hokkaido 089-2113, Japan\\
stateofyuta@gmail.com
\thanks{\footnotesize 979-8-3315-7360-7/26/$\$31.00$ \copyright2026 IEEE}              
}

\maketitle

\thispagestyle{plain}
\pagestyle{plain}

\maketitle

\thispagestyle{plain}
\pagestyle{plain}

\begin{abstract}
In the initial deployment of large-scale distributed space systems using small satellites, achieving a reliable transition to passively stable orbits while maintaining inter-satellite distances within effective control and communication ranges is crucial, particularly given the presence of deployment errors and uncontrolled coasting phases. This study presents a framework for designing formation initialization that provides probabilistic safety guarantees. The scope covers the initial deployment phase—from sequential release by a single carrier to commissioning, control activation, and transition to passive stabilization. Strict separation limits during initialization necessitate low release velocities to minimize relative drift before control activation. However, in the low-velocity regime, the allowable tolerances for release velocity and angular rate errors tighten significantly to satisfy distance constraints, making hardware requirements a critical bottleneck. To address this, we model the initialization sequence as a stochastic process and derive closed-form constraints on deployment errors and control activation intervals. These conditions ensure that inter-satellite distances remain within the allowable separation limit with a prescribed probability. Monte Carlo simulations, configured using the error bounds and intervals derived from the proposed constraints, demonstrate that inter-satellite distances are successfully maintained within the allowable range. The proposed framework enables the safe initialization of large-scale distributed space systems by translating strict separation constraints into quantifiable hardware requirements.
\end{abstract} 
\tableofcontents
\section{Introduction}
\begin{figure}[tb!]
\centering
  \begin{minipage}[b]{0.495\columnwidth}
    \centering
    \includegraphics[width=1\columnwidth]{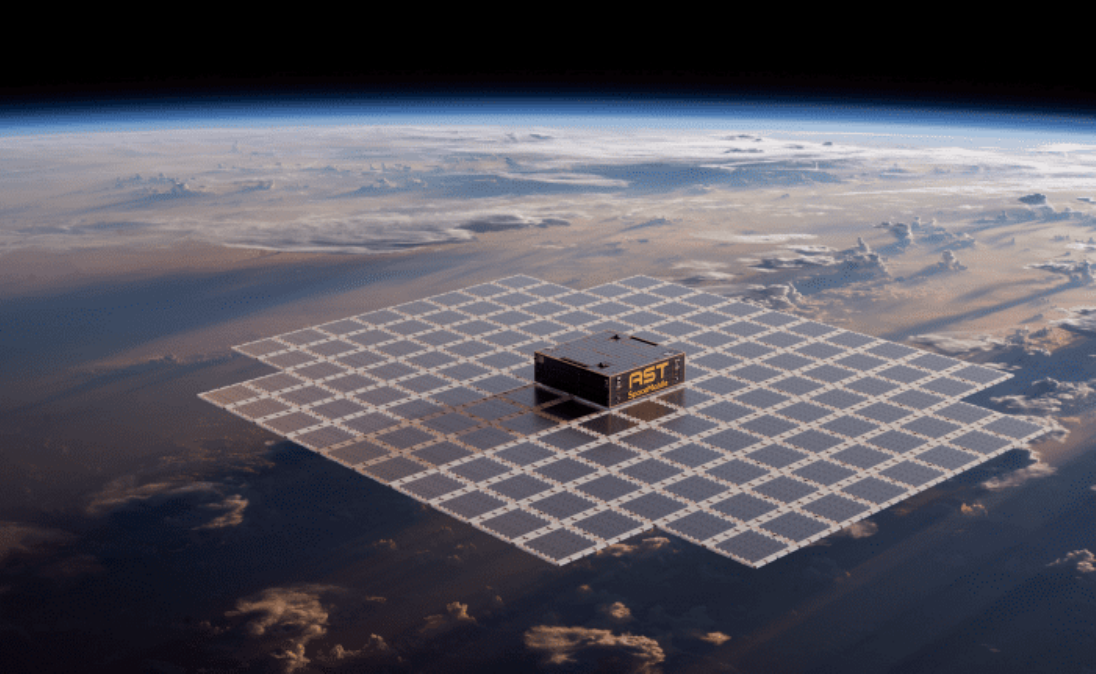}
    \subcaption{Monolithic aperture antenna.}
  \end{minipage}
  \begin{minipage}[b]{0.494\columnwidth}
    \centering
\includegraphics[width=1\columnwidth]{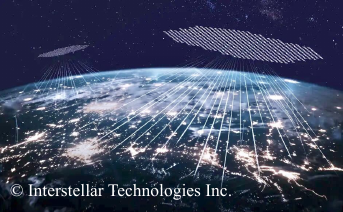}
    \subcaption{Distributed aperture antenna.}
  \end{minipage}\\
\begin{minipage}[b]{1\columnwidth}
    \centering
\includegraphics[width=1.0\linewidth]{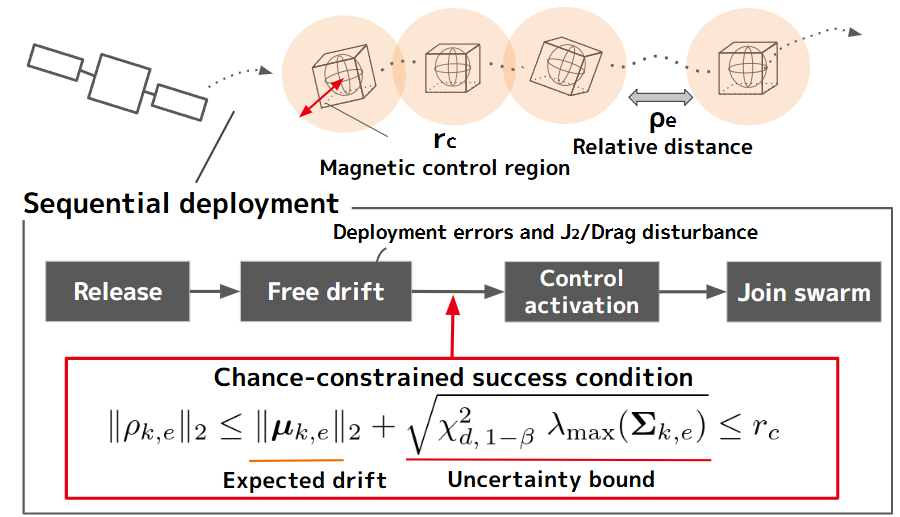}    
    \subcaption{Distance-constraint formation initialization framework.}
  \end{minipage}
  \caption{Monolithic space antenna array (Conceptual illustrations of the BlueWalker3 satellite. © AST SpaceMobile), distributed space antenna array \cite{takahashi2025distance,shim2025feasibility,takahashi2025scalable,shim2026early,morioka2024dense}, and our fuel-free formation initialization framework. The chance-constrained success condition guarantees that the relative distance remains within the magnetic control region under release errors, $J_2$ effect, and atmospheric drag.}
   \label{fig:graphical_abstruct}
\end{figure}

Deploying resource-constrained small-satellite swarms into a distributed space system with high reliability can support broader space utilization. Driven by growing demand for high-capacity communications and advanced observation, satellite swarms have been studied as an alternative to a single large spacecraft \cite{Guo2021,QADIR2023296,Radhakrishnan2016,Tuzi2023,Kong2004,Hadaegh2016,She2019,takahashi2025distance,shim2025feasibility,takahashi2025scalable,shim2026early,morioka2024dense}. By coordinating multiple small spacecraft, these systems can achieve large-aperture-equivalent mission objectives with reduced mass while providing adaptability and redundancy \cite{Zhu2022,Geraci2023,Saleh2003}. Accordingly, small-satellite-based distributed systems may lower cost and risk for future missions and infrastructure. In particular, safety during swarm initialization becomes a primary bottleneck, since loss of controllability or connectivity at this stage can prevent successful formation establishment.

Safe initialization of a small-satellite swarm requires that both controllability and inter-satellite communication constraints be satisfied throughout the early phase. In formation flight control for small satellites, propellantless approaches such as differential aerodynamic drag \cite{mazal2015rendezvous,BRUINSMA2023} and magnetic interactions \cite{shim2025feasibility,schweighart2005emff,youngquist2013alternating,takahashi2024neural,zhang2016angular,Foust2018,takahashi2022kinematics,takahashi2025coil,takahashi2025noda_mmh} have been widely investigated to maintain formations under limited onboard resources. Because these approaches provide limited control authority, the practical controllable range is governed by the distance scale \cite{takahashi2026graph}. Within that distance, the achievable control effect can balance the control effort required under disturbances. In addition, ground-based individual supervision becomes impractical as the swarm size increases, and inter-satellite link conditions impose further range constraints. To satisfy these control and communication ranges in the initial phase, low-speed release is required so that relative separation does not grow during the uncontrolled interval before control activation. Moreover, in the low-speed regime, release-induced rotation due to an offset line of action becomes non-negligible. Therefore, a unified framework is required to handle both release-velocity and rotation-induced errors while maintaining the distance-limited controllable region during the initialization sequence.

However, when low-speed release is assumed, the tolerable error budget consistent with the distance constraint can decrease rapidly. In addition to release velocity errors, tip-off rotation induced by a line-of-action offset can become a dominant contributor. For small satellites with a large area-to-mass ratio, this rotation can change attitude and effective cross-sectional area, which may promote growth of relative motion. Therefore, a key problem is to treat velocity errors and rotation-induced errors jointly and to maximize the allowable error bounds at release within a short deployment interval.

Evaluating hardware requirements for multi-satellite deployment calls for a deployment logic that consistently represents the initialization sequence. The sequence includes release, an uncontrolled commissioning interval, activation of active control, and a transition to passively stable motion, whereas prior studies often treated separation design and stabilization analysis independently \cite{Jeffrey2003,inamori2022inorbit}. Such separation can hinder scalable guarantees under dispersed release conditions because the stochastic state dimension increases rapidly with swarm size. This study models sequential releases recursively and derives allowable error bounds that keep the swarm within a permissible separation distance. The resulting framework can enable scalable hardware requirement design while maintaining probabilistic guarantees for distributed space systems.

\subsection{Contributions and Paper Organization}

This study aims to establish practical design guidelines for safe initial deployment of resource-constrained satellite swarms, so that a distributed space system can be realized with quantified safety margins under limited control authority. Two main contributions are provided in support of this objective. First, chance-constrained conditions are derived to translate limited control authority, which is particularly relevant during passive stabilization, into hardware-level requirements for the deployment mechanism, including allowable tolerances in release velocity, release direction, tip-off angle, and induced angular-rate errors, as well as permissible deployment intervals. Second, numerical simulations indicate that, in sequential deployment design for distributed space systems with massive swarms, the derived deployment requirements converge beyond a certain swarm size. This behavior suggests that system-level safety can be assessed through evaluations on a local subset of satellites, enabling a scalable initialization design despite limited onboard resources.

The remainder of this paper is organized as follows. Section~2 summarizes the fundamental models and assumptions used in this study. Section~3 formulates the sequential deployment and activation protocol as a mathematical problem and introduces the safety criterion under limited control authority. Based on this formulation, Section~4 derives scalable chance-constraint conditions by modeling error propagation through the deployment sequence and relating it to swarm consensus control. Section~5 presents numerical case studies and Monte Carlo analyses to quantify allowable deployment errors and to illustrate the convergence property with increasing swarm size. Section~6 concludes by summarizing the main findings and implications for future distributed space systems. To validate the proposed strategy, we assume full knowledge of neighboring satellite states through established state estimation techniques, such as range-based swarm navigation \cite{timmons2022range}, magnetic-field–based relative navigation with sequential filtering \cite{shibata2024relative}, and MTQ-enabled femtosatellite navigation systems \cite{hu2019development}.
\section{Preliminaries}
This section summarizes the mathematical framework used to describe the initialization sequence of the satellite swarm considered in this study. Specifically, it introduces the J2-averaged Hill equations for modeling relative motion in Earth orbit, algebraic graph theory for representing inter-satellite interactions, consensus control for passive stabilization of the swarm, and a Gaussian-distribution-based formulation for probabilistic safety assessment.

\subsection{Relative Dynamics with Averaged $J_2$ Gravity Effects}
In this subsection, intersatellite relative motion is modeled using the Hill--Clohessy--Wiltshire (HCW) framework, extended to include the mean effects of Earth’s oblateness as in \cite{SchweighartSedwick2002, takahashi2025distance,takahashi2025scalable}.
The classical HCW model assumes a circular reference orbit; in contrast, this study adopts a modified reference orbit that incorporates the $J_2$-averaged perturbation.
The resulting dynamics are obtained by linearizing about a virtual center associated with the averaged motion, while residual perturbations such as differential drag and higher-order gravitational terms are neglected \cite{takahashi2025distance,takahashi2025scalable}.

Given the initial relative state at $t=0$, the analytical solution for the relative position vector $\bm{r}(t) = [x(t), y(t), z(t)]^\top$ in the curvilinear Local-Vertical Local-Horizontal (LVLH) frame is expressed as the sum of a secular drift center $\bm{r}_o(t)$ and bounded periodic oscillations \cite{takahashi2025distance,takahashi2025scalable}:
\begin{equation}
\label{eq:analytical_solution}
\bm{r}(t) = 
\underbrace{
\begin{bmatrix} 2C_1 \\ C_4 - \epsilon_2 C_1 t \\ 0 \end{bmatrix}
}_{\text{Drift Center } \bm{r}_o(t)}
+
\underbrace{
\begin{bmatrix}
\frac{1}{c_+} r_{xy} \sin(\omega_{xy}t + \theta_{xy}) \\
\frac{2}{c_-} r_{xy} \cos(\omega_{xy}t + \theta_{xy}) \\
(r_z + lt) \sin(\omega_z t + \theta_z)
\end{bmatrix}
}_{\text{Bounded Oscillation}}.
\end{equation}
where $\omega_{xy}$ and $\omega_z$ are the in-plane and out-of-plane eigenfrequencies adjusted for $J_2$ effects, and $c_\pm$ and $\epsilon_2$ are coefficients depending on the orbital altitude and inclination. The geometry of the trajectory is parameterized by six integration constants $C_1, \dots, C_6$, which correspond to the relative orbital elements and describe satellite swarm trajectories as shown in Fig.~\ref{fig:grid_formation}.

The coefficients in Eq.~\eqref{eq:analytical_solution} are $c_{\pm}$ and $\epsilon_2$, and the in-plane eigenfrequency is $\omega_{xy}$. They are defined by \cite{takahashi2025distance,takahashi2025scalable}
$$
\begin{aligned}
  c_{\pm} &= \sqrt{1 \pm s_{J_2}}, \quad
  \epsilon_2 = \frac{3+5s_{J_2}}{c_+ c_-}\,\omega_{xy}, \quad
  \omega_{xy} = c_{-}\,\sqrt{\frac{\mu_g}{r_{\mathrm{ref}}^3}}, \notag\label{eq:coeff_main}\\
  s_{J_2} &= \frac{k_{J_2}\left(1+3\cos 2i_{\mathrm{ref}}\right)}{4\,\mu_g\,r_{\mathrm{ref}}^{2}}. \label{eq:sj2_def}
\end{aligned}
$$
Here, $s_{J_2}$ is a dimensionless parameter representing the effect of the averaged $J_2$ perturbation, $i_{\mathrm{ref}}$ is the inclination of the reference orbit, $r_{\mathrm{ref}}$ is the reference orbital radius, $\mu_g$ is the gravitational parameter of the central body, and $k_{J_2}$ is a constant that scales the $J_2$ contribution. 
These definitions specify $c_{\pm}$, $\epsilon_2$, and $\omega_{xy}$ without introducing undefined symbols.

The integration constants $C_1$--$C_6$ are determined from the initial state at $t=0$ expressed in the scaled in-plane coordinates $\bar{x}$ and $\bar{y}$. 
The scaled coordinates are defined by
\begin{equation*}
  \bar{x} = c_+ x, \qquad \bar{y} = c_- y,
\end{equation*}
and the corresponding scaled velocities are $\dot{\bar{x}} = c_+ \dot{x}$ and $\dot{\bar{y}} = c_- \dot{y}$.
Using the initial values $(\bar{x}, \bar{y}, z, \dot{\bar{x}}, \dot{\bar{y}}, \dot{z})$ at $t=0$, the constants are computed as
\begin{align}
  C_1 &= \frac{c_+}{c_-^{\,2}}\left(2\bar{x} + \frac{\dot{\bar{y}}}{\omega_{xy}}\right), &
  C_4 &= \frac{\,\bar{y} - 2\dot{\bar{x}}/\omega_{xy}\,}{c_-}, \notag\\
  C_2 &= \frac{\bar{y} - c_- C_4}{2}, &
  C_3 &= \bar{x} - 2c_+ C_1.
\end{align}
This ordering first defines the variables appearing in the expressions for $C_1$--$C_4$ and then uses them to compute the integration constants.


Since $C_1, C_4$ is uniquely determined by a linear combination of the initial relative position and velocity (specifically, the semi-major axis difference) at the moment of release \cite{takahashi2025distance,takahashi2025scalable}, any deployment error that results in $C_1 \neq 0$ will inevitably cause a secular drift, separating the satellites over time.
Therefore, ensuring $C_1 \approx 0$ is the primary objective for formation initialization, while $C_4$ determines the constant offset in the along-track direction.
\begin{figure}
    \centering
    \includegraphics[width=0.8\linewidth]{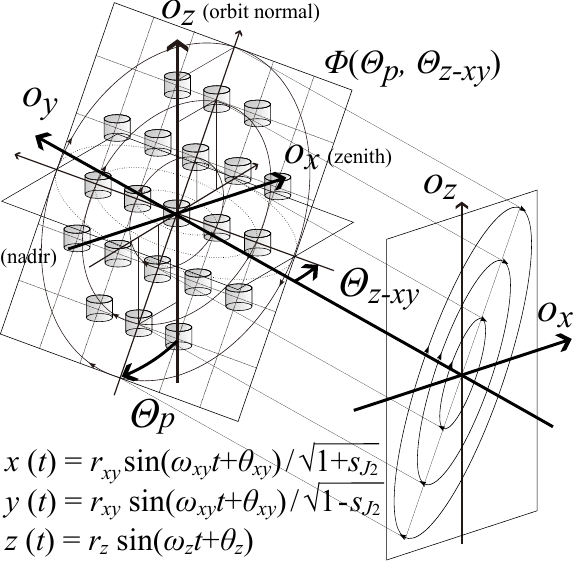}
    \caption{The grid formation of satellite swarm in a coplanar equidistant plane\cite{takahashi2025distance,takahashi2025scalable}.}
    \label{fig:grid_formation}
\end{figure}

\subsection{Algebraic Graph Theory}
\label{subsec:algraph}

This subsection summarizes the graph-theoretic notation used to model a satellite cluster and its expansion by sequential deployment.
A (simple, undirected) graph at stage $k$ is denoted by $\mathcal{G}_k=(\mathcal{V}_k,\mathcal{E}_k)$, where $\mathcal{V}_k$ is the node set and $\mathcal{E}_k$ is the edge set.
We write $n_k\coloneqq|\mathcal{V}_k|$ and $m_k\coloneqq|\mathcal{E}_k|$, and index the nodes as $\mathcal{V}_k=\{1,\dots,n_k\}$.
Each edge $e\in\mathcal{E}_k$ is assigned an arbitrary orientation only for algebraic representation; the physical quantities of interest will be expressed through edge-wise relative coordinates and thus do not depend on the orientation.

Let $E_k\in\mathbb{R}^{n_k\times m_k}$ be the oriented incidence matrix of $\mathcal{G}_k$.
For an edge $e=(i,j)$ oriented from $i$ to $j$, the corresponding column of $E_k$ has $+1$ at row $i$, $-1$ at row $j$, and $0$ elsewhere.
Given node states $x_i\in\mathbb{R}^{d}$ and the stacked node state $x^{(k)}\in\mathbb{R}^{dn_k}$, the stacked edge differences are represented by
\begin{align}
x^{(k)} &\coloneqq [x_1^\top,\dots,x_{n_k}^\top]^\top \in \mathbb{R}^{dn_k}, \nonumber\\
\rho^{(k)} &\coloneqq (E_k^\top\otimes I_d)x^{(k)} \in \mathbb{R}^{dm_k}.
\label{eq:prelim_edge_stack_general}
\end{align}
Here, $\otimes$ denotes the Kronecker product and $I_d$ is the $d\times d$ identity matrix.
The vector $\rho^{(k)}$ stacks the edge-wise differences, which correspond to $x_i-x_j$ for all $(i,j)\in\mathcal{E}_k$ up to the arbitrary edge orientation.

The (node) Laplacian and the edge Laplacian are defined by
\begin{align}
L_k &\coloneqq E_k E_k^\top \in \mathbb{R}^{n_k\times n_k}, \nonumber\\
L_{e,k} &\coloneqq E_k^\top E_k \in \mathbb{R}^{m_k\times m_k}.
\label{eq:prelim_edge_laplacian}
\end{align}
Both $L_k$ and $L_{e,k}$ are positive semidefinite.
When $\mathcal{G}_k$ is connected, $L_k$ has a single zero eigenvalue associated with the consensus subspace.
Moreover, $L_{e,k}$ has rank $m_k-1$ under the same connectivity condition.
We denote by $L_{e,k}^\dagger$ the Moore--Penrose pseudoinverse of $L_{e,k}$.

\subsection{Distributed Drift-Center Consensus Control for Long-Term Bounded Relative Motion}

As indicated by the analytical solution \eqref{eq:analytical_solution}, the J2-averaged model represents the long-term relative motion as the superposition of (i) the drift-center evolution, which captures the slow drift of the relative-orbit center, and (ii) a bounded elliptical circulation about that center.
The drift-center coordinates provide a compact description of the LVLH center of the relative orbit and directly encode the long-term drift behavior.
Accordingly, maintaining a formation geometry over long durations can be interpreted as regulating the drift-center coordinates among satellites so that the cluster does not diverge due to orbital drift.
In the following, the orbital dynamics that map the applied control to the drift-center evolution are assumed to be available in the orbit model part (not included here), while the inter-satellite coupling is described by the graph-theoretic objects introduced in the previous subsection.

In this work, it is assumed that each satellite can estimate (or exchange) drift-center coordinates with its neighbors through inter-satellite links, and that a distributed feedback law can be applied over the active communication/control graph.
Let $r_{o,i}(t)\in\mathbb{R}^{d}$ denote the drift-center coordinate of satellite $i$, and consider the currently connected cluster represented by $\mathcal{G}_k=(\mathcal{V}_k,\mathcal{E}_k)$.
Define the stacked drift-center state and its stacked edge-wise differences by
\begin{align}\label{eq:stack_ro}
r_o^{(k)}(t) &\coloneqq [r_{o,1}^\top(t),\dots,r_{o,n_k}^\top(t)]^\top \in \mathbb{R}^{dn_k}, \nonumber\\
\rho_o^{(k)}(t) &\coloneqq (E_k^\top\otimes I_d)r_o^{(k)}(t) \in \mathbb{R}^{dm_k}. 
\end{align}
The vector $\rho_o^{(k)}(t)$ compactly represents the collection of edge-wise drift-center differences $r_{o,i}(t)-r_{o,j}(t)$ for all $(i,j)\in\mathcal{E}_k$ up to the arbitrary edge orientation.

The stabilization objective is formulated as suppressing $\rho_o^{(k)}(t)$ so that disagreement among $\{r_{o,i}(t)\}_{i\in\mathcal{V}_k}$ vanishes on the connected cluster.
When the actuator is not saturated and the feedback gains are appropriately tuned for the anticipated disturbance environment, the closed-loop effect of the distributed controller can be interpreted as a consensus-type contraction of the drift-center differences on $\mathcal{G}_k$.
Namely, the controller reduces disagreement in $\{r_{o,i}(t)\}$ and drives the cluster toward a configuration where all satellites share approximately the same drift center.
In this regime, once the drift-center disagreement is sufficiently suppressed, the orbital dynamics no longer induce unbounded separation among satellites, and the formation remains bounded without requiring continuous large corrective actions.

\subsection{Gaussian-distribution-based probabilistic safety guarantees}
\label{subsec:prelim_gaussian_prob}

This subsection summarizes only the probabilistic facts that are directly used to derive the deterministic sufficient condition for the stage-wise chance constraint in later sections. Throughout, $\|\cdot\|_2$ denotes the Euclidean norm and $\lambda_{\max}(\cdot)$ denotes the maximum eigenvalue of a symmetric matrix.

Let $\bm{x}\sim\mathcal{N}(\bm{\mu},\bm{\Sigma})$ and $\bm{y}\sim\mathcal{N}(\bm{\mu}_y,\bm{\Sigma}_y)$ be independent Gaussian random vectors. For an affine combination $\bm{z}=A\bm{x}+B\bm{y}$ with deterministic matrices $(A,B)$, $\bm{z}$ is also Gaussian and its moments are given by
\begin{equation}
\label{eq:prelim_gaussian_affine_compact}
\mathbb{E}[\bm{z}] = A\bm{\mu}+B\bm{\mu}_y,\quad
\mathrm{Cov}(\bm{z}) = A\bm{\Sigma}A^\top + B\bm{\Sigma}_yB^\top .
\end{equation}

Next, consider the Euclidean-norm chance constraint $\mathbb{P}(\|\bm{x}\|_2\le r_c)\ge 1-\beta$ with $\bm{x}\sim\mathcal{N}(\bm{\mu},\bm{\Sigma})$ and $\bm{\Sigma}\succ 0$. The centered quadratic form $(\bm{x}-\bm{\mu})^\top\bm{\Sigma}^{-1}(\bm{x}-\bm{\mu})$ follows a chi-square distribution with $d$ degrees of freedom, and let $\chi^2_{d,\,1-\beta}$ denote its $(1-\beta)$ quantile. Using the Rayleigh-quotient bound 
\begin{align}
    \bm{z}^\top\bm{\Sigma}\bm{z}\le \lambda_{\max}(\bm{\Sigma})\|\bm{z}\|_2^2
\end{align} 
a $(1-\beta)$ confidence ellipsoid is contained in the Euclidean ball centered at $\bm{\mu}$ with radius
\begin{equation}
\label{eq:prelim_confidence_ball_radius}
r(\bm{\Sigma},\beta)\coloneqq
\sqrt{\chi^2_{d,\,1-\beta}\;\lambda_{\max}(\bm{\Sigma})}.
\end{equation}
Therefore, a sufficient deterministic condition for the chance constraint is
\begin{equation}
\label{eq:prelim_chance_sufficient_compact}
\|\bm{\mu}\|_2 + r(\bm{\Sigma},\beta) \le r_c,
\end{equation}
which implies $\mathbb{P}(\|\bm{x}\|_2>r_c)\le \beta$.
This confidence-ball reduction is the only probabilistic ingredient needed in the subsequent stage-wise safety derivation.

\section{Problem Formulation}
\label{sec:problem_formulation}

This section formulates the sequential initialization of a satellite swarm as a stage-wise design problem on an expanding interaction graph.
The objective is to determine a deployment policy and interval so that newly formed inter-satellite links remain controllable/safe at each deployment stage with a prescribed risk level.

\subsection{Sequential deployment and objective}
\label{subsec:sequential_objective}

We consider a mission scenario in which a satellite swarm is constructed sequentially by releasing satellites from a carrier.
Let $N$ be the total number of satellite clusters to be deployed.
Deployment events occur at discrete time instants
\begin{align}
t_k \coloneqq k\Delta T,\qquad k=0,1,\dots,N-1.
\end{align}
where $\Delta T>0$ is the deployment interval.

The swarm topology is modeled as a sequence of expanding graphs $\{\mathcal{G}_k\}_{k=1}^{N}$.
At stage $k$, the currently deployed and controllably connected cluster is represented by $\mathcal{G}_k=(\mathcal{V}_k,\mathcal{E}_k)$.
At the next deployment time $t_{k+1}$, a set of new satellites $\mathcal{W}_{k+1}$ is added, and a set of new edges $\mathcal{F}_{k}$ is established to connect the new satellites to the existing cluster.
Accordingly, the graph expands as Eq.~\eqref{eq:graph_expansion_pf_revised} and Figure~\ref{fig:sequentialdeployment}
\begin{align}
\label{eq:graph_expansion_pf_revised}
\mathcal{V}_{k+1} = \mathcal{V}_k \cup \mathcal{W}_{k+1},\qquad
\mathcal{E}_{k+1} = \mathcal{E}_k \cup \mathcal{F}_{k+1}.
\end{align}

\begin{figure}
    \centering
    \includegraphics[width=\linewidth]{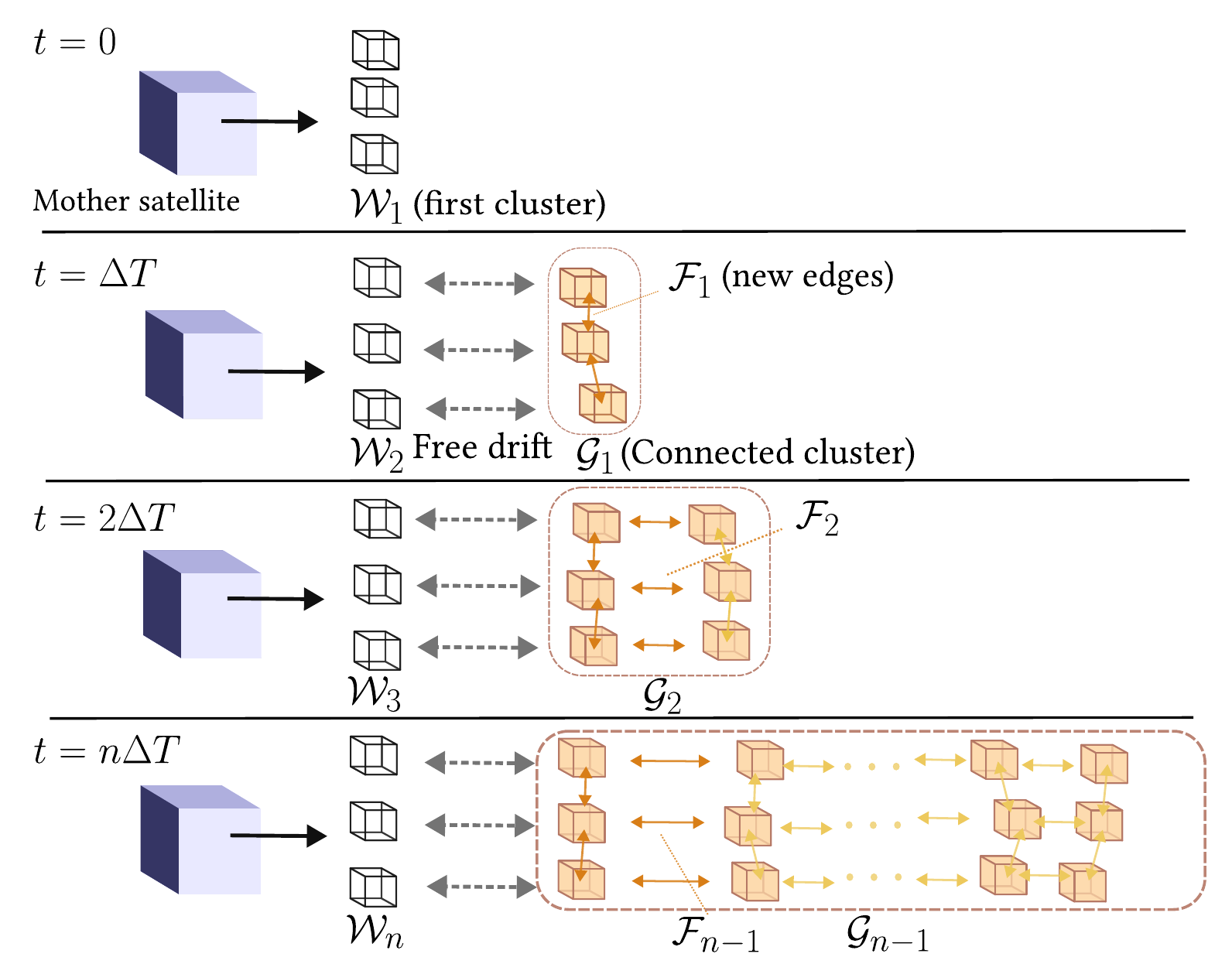}
    \caption{A graphical image of sequential deployment}
    \label{fig:sequentialdeployment}
\end{figure}

During each interval $[t_k,t_{k+1})$, satellites already in the cluster follow the closed-loop distributed controller described earlier, whereas newly released satellites may undergo a stage-dependent free-drift before their links are activated.
As a compact representation of inter-satellite mismatch on the currently active links, the edge state is defined as the edge-wise drift-center difference on $\mathcal{G}_k$, as in Eq.~\eqref{eq:stack_ro}.

The design objective is to choose the deployment interval $\Delta T$ and a stage-wise deployment/attachment policy (equivalently, the sequence of attachment sets $\{\mathcal{F}_{k+1}\}_{k=0}^{N-1}$ and the associated anchor assignments) so that the stage-wise safety requirement stated in the next subsection is satisfied for all expansion steps $k\to k+1$.

\subsection{Deployment-induced drift-center mismatch model}
\label{subsec:deployment_error_model}

This subsection models how the drift-center mismatch evolves during the interval between two deployment instants.
In the sequential initialization considered here, satellites already integrated into the controllably connected cluster reduce their drift-center disagreement through distributed feedback, whereas newly released satellites may undergo a free-drift period before their links are activated.
The following model captures these two effects at the drift-center level.

For the existing cluster represented by $\mathcal{G}_k$, the stacked edge-wise drift-center difference is assumed to contract over one deployment interval as
\begin{align}
\label{eq:pf_contraction_model}
\bm\rho^{(k)}(t_{k+1}^-)
=
\Phi_k\,\bm\rho^{(k)}(t_k^+),\notag\\
\Phi_k\coloneqq \exp(-\Delta T\,(L_{e,k}\otimes A)).
\end{align}
where $L_{e,k}$ is the edge Laplacian of $\mathcal{G}_k$ introduced in the preliminaries and $A$ represents the closed-loop contraction dynamics associated with the consensus regulation of the drift-center coordinates.

Next, consider a newly created edge between a leading satellite $i$ and a trailing satellite $j$.
Because the activation times of effective control are not synchronized, the two satellites experience different free-drift durations before the link becomes active.
Let $\Psi(t)$ denote the state-transition matrix of the drift-center dynamics under free drift over duration $t$.
Then the drift-center mismatch injected into the new edge at the activation instant is modeled as the difference of the two free-drift trajectories:
\begin{align}
\label{eq:pf_w_general}
\bm{w}^{(k+1)}_{i,j}
=
\Psi(\tau_i)\,\bm{r}_{o,i}(t_i^+)
-
\Psi(\tau_j)\,\bm{r}_{o,j}(t_j^+),
\end{align}
where $\tau_i$ and $\tau_j$ are the free-drift durations of satellites $i$ and $j$, respectively.
For the drift-center coordinate used in this work, $\Psi(t)$ is given by
\begin{align}
\label{eq:pf_Psi_def}
\Psi(t)
\coloneqq
\begin{bmatrix}
1 & 0\\[2pt]
-\dfrac{\epsilon_2}{2}t & 1
\end{bmatrix},
\end{align}
In the representative timing where the leading satellite experiences one additional deployment interval of free drift, the injected mismatch can be written as
\begin{align}
\label{eq:pf_w_asym}
\bm{w}^{(k+1)}_{i,j}
=
\Psi(2\Delta T)\,\bm{r}_{o,i}(t_k^+)
-
\Psi(\Delta T)\,\bm{r}_{o,j}(t_{k+1}^+).
\end{align}

When the area-to-mass ratio is large, atmospheric drag may add an extra drift component during free drift.
To incorporate this effect, the drift-center motion is corrected by a drag-induced component.
Let $\bm{r}_o(t)$ denote the drift-center motion defined under the averaged $J_2$ model, and let $\bm{r}'_o(t)$ denote the corrected drift-center motion including atmospheric drag.
Assuming linear superposition, the corrected drift-center is written as
\begin{align}
\label{eq:pf_ro_superposition}
\bm{r}'_o(t)=\bm{r}_o(t)+\bm{r}_{\mathrm{air},o}(t).
\end{align}
Mapping the constant term and the drift term proportional to $t$ to the relative-element pair $(C_1,C_4)$ yields
\begin{align}
\label{eq:pf_ro_prime}
\bm{r}'_o(t)
&=
\begin{bmatrix}
2C'_1 \\[2pt] C'_4 -\epsilon_2\, C'_1\, t
\end{bmatrix}
\notag\\
&=
\begin{bmatrix}
2\!\left(C_1 + \dfrac{1}{2}C_{1,\mathrm{air}}\right)\\[6pt]
C_4 + \dfrac{\epsilon_2}{2}\,C_{4,\mathrm{air}}
- \epsilon_2\!\left(C_1+\dfrac{1}{2}C_{1,\mathrm{air}}\right)t
\end{bmatrix}.
\end{align}
Accordingly, the injected mismatch model Eq.~\eqref{eq:pf_w_asym} is extended by replacing $\bm{r}_{o,i}$ and $\bm{r}_{o,j}$ with $\bm{r}'_{o,i}$ and $\bm{r}'_{o,j}$.

Here, the atmospheric-drag-induced increments $C_{1,\mathrm{air}}$ and $C_{4,\mathrm{air}}$ are defined as
\begin{align}
\label{eq:C_air_defs_theta_xi_edge}
  C_{1,\mathrm{air}}
  &\coloneqq \sum_{m=1}^{\infty}
     \frac{\hat F_m\,A/m}{c_-\,\nu_m}\,\sin{\psi_m}, \\
  C_{4,\mathrm{air}}
  &\coloneqq \sum_{m=1}^{\infty}
     \frac{\hat F_m\,A/m}{c_-\,\nu_m^{2}}\,\cos{\psi_m}.
\end{align}
where $\nu_m$ is a variable depending on the mean value of the angular rate. The detail derivation of the atmospheric-drag-induced increments is explained in Appendix~A.
Accordingly, the deployment-induced error is described using the extended drift center as
\begin{align}
\label{eq:w_asym_with_drag_theta_xi_edge}
w^{k+1}
= \Psi(2\Delta T)\,\bm{r}_{o,i}'(0) - \Psi(\Delta T)\,\bm{r}_{o,j}'(0).
\end{align}

\subsection{Chance-constrained safety requirement}
\label{subsec:safety_requirement}

The closed-loop model and distributed feedback assumptions are intended to hold only when each newly activated link starts within a domain where control and inter-satellite interaction are effective.
Accordingly, a controllable/safe region for each edge state is specified by the Euclidean ball
\begin{align}
\label{eq:safe_region_pf_revised}
\mathcal{S}_{\mathrm{cont}}
\coloneqq
\left\{\bm{\rho}_{o,e}\in\mathbb{R}^{d}\ \middle|\ \|\bm{\rho}_{e}\|_2 \le r_c \right\}.
\end{align}
where $r_c>0$ is an effective control/communication radius.

The most critical moments in sequential deployment occur immediately after new connections are formed, i.e., at $t_{k+1}^+$.
At these instants, newly generated edges inherit offsets accumulated during free drift.
To account for stochastic deployment errors and disturbances, stage-wise safety is imposed in a chance-constrained manner.
For each expansion step $k\to k+1$, every newly formed edge is required to satisfy the controllability condition with probability at least $1-\beta$:
\begin{align}
\label{eq:edgewise_stage_chance_pf_revised}
\max_{e\in\mathcal{F}_{k+1}}
\mathbb{P}\!\left(\|\bm{\rho}_{e}^{(k+1)}(t_{k+1}^+)\|_2 > r_c\right)
\le \beta,\notag\\\qquad k=0,1,\dots,N-1.
\end{align}
where $\mathcal{F}_{k+1}\subseteq\mathcal{E}_{k+1}$ denotes the set of newly created edges at stage $k+1$ and $\beta\in(0,1)$ is the allowable risk level per edge per stage.
A deterministic sufficient condition for Eq.~\eqref{eq:edgewise_stage_chance_pf_revised} will be derived later using the Gaussian-probabilistic tools summarized in the preliminaries.
\section{Main Results: Probabilistic Release Sequence Design}
\label{sec:main_results}

This section presents two results.
First, a stage-to-stage linear recursion is derived for the stacked edge-wise drift-center mismatch when the interaction graph expands through sequential deployment.
Second, a deterministic sufficient condition is obtained to enforce the stage-wise chance-constrained safety requirement introduced in Section~\ref{subsec:safety_requirement}.

\subsection{Recursive error propagation formulation for an expanding graph}
\label{subsec:expanding_graph_recursion}

This subsection derives a compact update rule for the stacked edge state across an expansion step.
Under the contraction model for the existing cluster and the injected mismatch model for newly created edges specified in Section~\ref{subsec:deployment_error_model}, the post-expansion edge state is expressed as an affine function of the pre-expansion state and the injected mismatch.

\begin{lemma}[Stage-to-stage update of the stacked edge state]
\label{lem:expanding_graph_update}
Consider an expansion step $\mathcal{G}_k\to\mathcal{G}_{k+1}$ with newly created edges $\mathcal{F}_{k+1}$.
Let the existing-cluster contraction over one deployment interval be as in \eqref{eq:pf_contraction_model}, and let $\bm w^{(k+1)}\in\mathbb{R}^{d|\mathcal{F}_{k+1}|}$ denote the stacked drift-center mismatch injected into the newly created edges at $t_{k+1}^+$ as described in Section~\ref{subsec:deployment_error_model}.
Then the stacked edge state immediately after the graph expansion satisfies
\begin{align}
\label{eq:rho_update_main}
\bm\rho^{(k+1)}(t_{k+1}^+)
=
\mathcal{A}_k\,\bm\rho^{(k)}(t_k^+)
+
\mathcal{B}_k\,\bm w^{(k+1)}.
\end{align}
Here, letting $\mathcal{P}_{k+1}$ be a permutation matrix that aligns the edge ordering of $\mathcal{G}_{k+1}$, define
\begin{align}
\label{eq:Ak_def_main}
\mathcal{A}_k
&\coloneqq
\mathcal{P}_{k+1}
\begin{bmatrix}
\Phi_k\\
\mathcal{R}_{k,\mathcal{F}_{k+1}}(I_{dm_k}-\Phi_k)
\end{bmatrix},\\
\label{eq:Bk_def_main}
\mathcal{B}_k
&\coloneqq
\mathcal{P}_{k+1}
\begin{bmatrix}
0\\
I_{d|\mathcal{F}_{k+1}|}
\end{bmatrix},
\end{align}
and the projection matrix onto the newly created edges is
\begin{align}
\label{eq:R_def_main}
\mathcal{R}_{k,\mathcal{F}_{k+1}}
\coloneqq
\bigl(V_{k,\mathcal{F}_{k+1}}^\top E_k\, L_{e,k}^\dagger \otimes I_d\bigr).
\end{align}
where $V_{k,\mathcal{F}_{k+1}}\in\{0,1\}^{n_k\times|\mathcal{F}_{k+1}|}$ specifies to which node (anchor) in the existing cluster each new edge is attached and $\Phi_k$ is given in Eq.~\eqref{eq:pf_contraction_model}.
\end{lemma}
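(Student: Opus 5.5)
The plan is to split the post-expansion edge vector into the block of old edges $\mathcal{E}_k$ and the block of new edges $\mathcal{F}_{k+1}$. I will compute each block separately and then reassemble them with the permutation $\mathcal{P}_{k+1}$. The reassembly is purely bookkeeping: $\mathcal{P}_{k+1}$ maps the ordering $[\mathcal{E}_k;\mathcal{F}_{k+1}]$ to the native edge ordering of $\mathcal{G}_{k+1}$. Once both blocks are written as linear maps of $\bm\rho^{(k)}(t_k^+)$ and $\bm w^{(k+1)}$, the forms of $\mathcal{A}_k$ and $\mathcal{B}_k$ in \eqref{eq:Ak_def_main}--\eqref{eq:Bk_def_main} can be read off directly.

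\emph{Old-edge block.} Edges of $\mathcal{E}_k$ are not altered by the expansion at $t_{k+1}$, since they are only re-labelled. Hence their value at $t_{k+1}^+$ equals the value at $t_{k+1}^-$, which the contraction model \eqref{eq:pf_contraction_model} gives as $\Phi_k\bm\rho^{(k)}(t_k^+)$. This yields the top block $\Phi_k$ with zero coefficient on $\bm w^{(k+1)}$.

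\emph{New-edge block.} A new edge $e=(a,j)$ joins an anchor $a\in\mathcal{V}_k$ to a new satellite $j\in\mathcal{W}_{k+1}$. I will write its state as the free-drift mismatch $\bm w_e^{(k+1)}$ of Section~\ref{subsec:deployment_error_model}, plus the displacement the anchor's drift center underwent under consensus control during $[t_k,t_{k+1})$. The first term gives the identity block in $\mathcal{B}_k$. For the second term, I need to express the change of node states through edge states:
\begin{enumerate}
\item Write the node-level change as $\Delta r_o^{(k)} = r_o^{(k)}(t_{k+1}^-)-r_o^{(k)}(t_k^+)$. Only its disagreement component matters, because a common translation of all nodes is absorbed into the reference of the mismatch $\bm w$.
\item Recover the disagreement component from edges. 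Since $\bm\rho=(E_k^\top\otimes I_d)r_o$ and $E_k^\top=L_{e,k}L_{e,k}^\dagger E_k^\top$ (because $\mathrm{range}(E_k^\top)=\mathrm{range}(L_{e,k})$), the component of $r_o$ orthogonal to consensus is $(E_kL_{e,k}^\dagger\otimes I_d)\bm\rho$. I will verify this using $E_kL_{e,k}^\dagger E_k^\top = L_kL_k^\dagger$, the projector onto $\mathbf 1^\perp$.
\item Apply this to the edge increment. The increment $\bm\rho^{(k)}(t_{k+1}^-)-\bm\rho^{(k)}(t_k^+)=-(I-\Phi_k)\bm\rho^{(k)}(t_k^+)$, so the anchor nodes' displacements are obtained by selecting rows with $V_{k,\mathcal{F}_{k+1}}^\top\otimes I_d$. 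This gives $\mathcal{R}_{k,\mathcal{F}_{k+1}}(I-\Phi_k)\bm\rho^{(k)}(t_k^+)$.
\end{enumerate}

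The main obstacle is fixing the sign and orientation convention, which decides whether the block appears as $+\mathcal{R}(I-\Phi_k)$ or $-\mathcal{R}(I-\Phi_k)$. The choice depends on orienting each new edge from the new satellite toward the anchor. I will also need to state explicitly the modelling assumption that $\bm w$ is referenced to the anchor's pre-contraction drift center. With these conventions fixed, the new-edge block is $\mathcal{R}_{k,\mathcal{F}_{k+1}}(I_{dm_k}-\Phi_k)\bm\rho^{(k)}(t_k^+)+\bm w^{(k+1)}$.

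Stacking the two blocks and applying $\mathcal{P}_{k+1}$ gives \eqref{eq:rho_update_main}.
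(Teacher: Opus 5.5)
Your proposal is correct and follows the paper's proof in Appendix~B essentially step for step: old edges contract by $\Phi_k$, and each new edge equals $\bm w^{(k+1)}$ minus the anchor's drift-center displacement. That displacement is recovered from the edge increment $(\Phi_k-I_{dm_k})\bm\rho^{(k)}(t_k^+)$ via the projector $\Pi_k=E_kL_{e,k}^\dagger E_k^\top$ and the selection $V_{k,\mathcal{F}_{k+1}}^\top$, and the two blocks are stacked and permuted by $\mathcal{P}_{k+1}$. Your explicit handling of the discarded common-mode component and of the new-to-anchor orientation that fixes the sign makes precise two points the paper leaves implicit: it inserts $\Pi_k$ without comment and simply writes $-\delta\bm\rho_{\mathrm{anc}}+\bm w^{(k+1)}$.
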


\begin{proof}\noindent
The proof is provided in Appendix~B.
\end{proof}

\subsection{Deterministic sufficient condition for stage-wise chance-constrained safety}
\label{subsec:prob_safety_condition}

This subsection converts the probabilistic edge-safety requirement in Eq.~\eqref{eq:edgewise_stage_chance_pf_revised} into a tractable deterministic inequality.
The key step is to combine the affine recursion in Lemma~\ref{lem:expanding_graph_update} with the Gaussian tools summarized in Section~\ref{subsec:prelim_gaussian_prob}.

For each $e\in\mathcal{E}_{k+1}$, introduce a selection matrix $J_{k+1,e}\in\{0,1\}^{d\times dm_{k+1}}$ that extracts the $d$-dimensional block associated with edge $e$ from a stacked vector in $\mathbb{R}^{dm_{k+1}}$.
Then
\begin{align}
\label{eq:edge_extract_main}
\bm\rho^{(k+1)}_e(t_{k+1}^-)
=
J_{k+1,e}\,\bm\rho^{(k+1)}(t_{k+1}^+).
\end{align}

Assume that $\bm\rho^{(k)}(t_k^+)$ and $\bm w^{(k+1)}$ are independent Gaussian random vectors.
Then, by Lemma~\ref{lem:expanding_graph_update}, the updated state $\bm\rho^{(k+1)}(t_{k+1}^+)$ is also Gaussian.
The mean and covariance of $\bm\rho^{(k+1)}(t_{k+1}^+)$ follow from the affine-Gaussian rule in Section~\ref{subsec:prelim_gaussian_prob}. In particular, letting $(\bm{\mu}_{\rho,k},\bm{\Sigma}_{\rho,k})$ denote the mean and covariance of $\bm\rho^{(k)}(t_k^+)$, and letting $(\bm{\mu}_{w,k+1},\bm{\Sigma}_{w,k+1})$ denote those of $\bm w^{(k+1)}$, the moments of $\bm\rho^{(k+1)}(t_{k+1}^+)$ are updated as
\begin{align}
\label{eq:mean_update_edgewise_theta_xi}
\bm{\mu}_{k+1} &= \mathcal{A}_k \bm{\mu}_{\rho,k} + \mathcal{B}_k \bm{\mu}_{w,k+1},\\
\label{eq:cov_update_edgewise_theta_xi}
\bm{\Sigma}_{k+1} &= \mathcal{A}_k \bm{\Sigma}_{\rho,k} \mathcal{A}_k^\top
+ \mathcal{B}_k \bm{\Sigma}_{w,k+1}\mathcal{B}_k^\top .
\end{align}
Let $(\bm\mu_{k+1},\bm\Sigma_{k+1})$ denote the mean and covariance of $\bm\rho^{(k+1)}(t_{k+1}^+)$.
Define the edge-wise moments by
\begin{align}
\label{eq:edge_moments_main}
\bm\mu_{k+1,e}\coloneqq J_{k+1,e}\bm\mu_{k+1},\quad
\bm\Sigma_{k+1,e}\coloneqq J_{k+1,e}\bm\Sigma_{k+1}J_{k+1,e}^\top .
\end{align}

This theorem provides a sufficient condition that enforces Eq.~\eqref{eq:edgewise_stage_chance_pf_revised} by bounding the edge-state distribution using the confidence-ball argument in Section~\ref{subsec:prelim_gaussian_prob}.
The condition depends only on the edge-wise mean and covariance at the activation instant.

\begin{theorem}[Stage-wise chance-constrained safety condition]
\label{thm:stagewise_safety_main}
Consider the expansion step $k\to k+1$ and the set of newly created edges $\mathcal{F}_{k+1}$.
If, for all $e\in\mathcal{F}_{k+1}$,
\begin{align}
\label{eq:sufficient_condition_main}
\|\bm\mu_{k+1,e}\|_2
+\sqrt{\chi^2_{d,\,1-\beta}\;\lambda_{\max}(\bm\Sigma_{k+1,e})}
\le r_c,
\end{align}
then the requirement Eq.~\eqref{eq:edgewise_stage_chance_pf_revised} is satisfied, namely
\begin{align}
\label{eq:prob_bound_main}
\forall e\in\mathcal{F}_{k+1},\qquad
\mathbb{P}\!\left(\|\bm\rho^{(k+1)}_e(t_{k+1}^+)\|_2> r_c\right)\le \beta.
\end{align}
\end{theorem}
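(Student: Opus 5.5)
Fix an arbitrary new edge $e\in\mathcal{F}_{k+1}$ and establish \eqref{eq:prob_bound_main} for it. Since the sufficient condition is imposed edge by edge, this gives the statement for every $e$, and hence the max in \eqref{eq:edgewise_stage_chance_pf_revised}.

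\textbf{Step 1: the edge marginal is Gaussian.} Under the standing assumption that $\bm\rho^{(k)}(t_k^+)$ and $\bm w^{(k+1)}$ are independent Gaussians, Lemma~\ref{lem:expanding_graph_update} writes $\bm\rho^{(k+1)}(t_{k+1}^+)$ as the affine combination $\mathcal{A}_k\bm\rho^{(k)}+\mathcal{B}_k\bm w^{(k+1)}$. Rule \eqref{eq:prelim_gaussian_affine_compact} then makes it Gaussian with moments \eqref{eq:mean_update_edgewise_theta_xi}--\eqref{eq:cov_update_edgewise_theta_xi}. Applying \eqref{eq:prelim_gaussian_affine_compact} once more with the deterministic selector $J_{k+1,e}$ gives
\[
\bm x\coloneqq \bm\rho^{(k+1)}_e(t_{k+1}^+)\sim\mathcal{N}(\bm\mu_{k+1,e},\bm\Sigma_{k+1,e}),
\]
with moments as in \eqref{eq:edge_moments_main}.

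\textbf{Step 2: confidence ellipsoid inside a ball.} Assume first $\bm\Sigma_{k+1,e}\succ0$. Write $\bm\mu=\bm\mu_{k+1,e}$ and $\bm\Sigma=\bm\Sigma_{k+1,e}$, and define the ellipsoid
\[
\mathcal{E}_\beta\coloneqq\{\bm x:(\bm x-\bm\mu)^\top\bm\Sigma^{-1}(\bm x-\bm\mu)\le\chi^2_{d,1-\beta}\}.
\]
The quadratic form is $\chi^2_d$-distributed, so $\mathbb{P}(\bm x\in\mathcal{E}_\beta)=1-\beta$. For $\bm x\in\mathcal{E}_\beta$, set $\bm v=\bm\Sigma^{-1/2}(\bm x-\bm\mu)$, so that $\|\bm v\|_2^2\le\chi^2_{d,1-\beta}$. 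The Rayleigh bound then gives
\[
\|\bm x-\bm\mu\|_2^2=\bm v^\top\bm\Sigma\bm v\le\lambda_{\max}(\bm\Sigma)\,\chi^2_{d,1-\beta},
\]
i.e.\ $\|\bm x-\bm\mu\|_2\le r(\bm\Sigma,\beta)$ as in \eqref{eq:prelim_confidence_ball_radius}.

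\textbf{Step 3: triangle inequality and conclusion.} On $\mathcal{E}_\beta$,
\[
\|\bm x\|_2\le\|\bm\mu\|_2+\|\bm x-\bm\mu\|_2\le\|\bm\mu\|_2+r(\bm\Sigma,\beta)\le r_c
\]
by hypothesis \eqref{eq:sufficient_condition_main}. Hence $\{\|\bm x\|_2>r_c\}\subseteq\mathcal{E}_\beta^c$, and so $\mathbb{P}(\|\bm x\|_2>r_c)\le\beta$. This is exactly \eqref{eq:prelim_chance_sufficient_compact} applied edgewise.

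\textbf{The main obstacle} is the degenerate case where $\bm\Sigma_{k+1,e}$ is only positive semidefinite. This can happen because of the projection structure of $\mathcal{A}_k$, or if some error components are deterministic. In that case $\bm\Sigma^{-1}$ does not exist. I would handle it by representing $\bm x=\bm\mu+\bm\Sigma^{1/2}\bm g$ with $\bm g\sim\mathcal{N}(0,I_d)$. Then
\[
\|\bm x-\bm\mu\|_2^2\le\lambda_{\max}(\bm\Sigma)\|\bm g\|_2^2,
\qquad \mathbb{P}\bigl(\|\bm g\|_2^2\le\chi^2_{d,1-\beta}\bigr)=1-\beta,
\]
and the same triangle-inequality argument goes through without inverting $\bm\Sigma$. (Alternatively one could use $\bm\Sigma+\varepsilon I$ and let $\varepsilon\to0$.) The Gaussian representation is in fact the cleaner route in general, so I would present the proof that way.
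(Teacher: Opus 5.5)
Your proof is correct and follows essentially the same route as the paper's Appendix~C. There, the $(1-\beta)$ confidence ellipsoid of the Gaussian edge state is enclosed in the ball of radius $\sqrt{\chi^2_{d,1-\beta}\,\lambda_{\max}(\bm\Sigma_{k+1,e})}$ about $\bm\mu_{k+1,e}$, and the triangle inequality together with the hypothesis places that ball inside $\{\|\bm x\|_2\le r_c\}$. Your representation $\bm x=\bm\mu+\bm\Sigma^{1/2}\bm g$ is a mild improvement, because it also covers a singular $\bm\Sigma_{k+1,e}$, which the paper's proof tacitly excludes by writing $\bm\Sigma_{k+1,e}^{-1}$.
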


\begin{proof}\noindent
The proof is provided in Appendix~C.
\end{proof}

This corollary summarizes a closed-form representation that is convenient when the deployment policy is fixed.
It enables batch evaluation of edge safety at a specified stage by propagating the influence of the injected mismatches.

\begin{corollary}[Closed-form evaluation for a fixed deployment policy]
\label{cor:closed_form_main}
Let the deployment policy (and thus $\{\mathcal{A}_k,\mathcal{B}_k\}_{k=0}^{M-1}$) be fixed.
Then the stacked edge state at stage $M$ admits the representation
\begin{align}
\label{eq:closed_form_rho_main}
\bm\rho^{(M)}(t_M^+)
&=
\left(\prod_{j=0}^{M-1}\mathcal{A}_j\right)\bm\rho^{(0)}(t_0^+)
+
\sum_{i=1}^{M}\Gamma_{M,i}\bm w^{(i)},\\
\label{eq:Gamma_def_main}
\Gamma_{M,i}
&\coloneqq
\left(\prod_{l=i}^{M-1}\mathcal{A}_l\right)\mathcal{B}_{i-1}.
\end{align}
\end{corollary}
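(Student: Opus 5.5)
The plan is to prove Corollary~\ref{cor:closed_form_main} by induction on $M$, using only the affine recursion of Lemma~\ref{lem:expanding_graph_update}. Since the deployment policy is fixed, the matrices $\mathcal{A}_k$ and $\mathcal{B}_k$ are deterministic and do not depend on the states. The recursion is therefore a time-varying linear system driven by the inputs $\bm w^{(k+1)}$. The corollary is simply its variation-of-constants formula.

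Before the induction, I will fix the product convention, because it is the one point where the argument could go wrong. Each $\mathcal{A}_l$ maps $\mathbb{R}^{dm_l}$ to $\mathbb{R}^{dm_{l+1}}$, so the matrices are rectangular and only compose in one order. I will read $\prod_{l=i}^{M-1}\mathcal{A}_l$ as the ordered product $\mathcal{A}_{M-1}\mathcal{A}_{M-2}\cdots\mathcal{A}_i$, with later stages on the left. Under this reading the dimensions chain correctly from $\mathbb{R}^{dm_i}$ to $\mathbb{R}^{dm_M}$. I will also adopt the convention that the empty product ($i=M$) is the identity $I_{dm_M}$, so that $\Gamma_{M,M}=\mathcal{B}_{M-1}$.

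The base case is $M=1$. Here \eqref{eq:closed_form_rho_main} reads $\bm\rho^{(1)}(t_1^+)=\mathcal{A}_0\bm\rho^{(0)}(t_0^+)+\mathcal{B}_0\bm w^{(1)}$. This is exactly \eqref{eq:rho_update_main} with $k=0$. (If one prefers $M=0$ as the base case, it holds trivially with an empty sum.)

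For the inductive step, I assume the formula holds at stage $M$ and apply Lemma~\ref{lem:expanding_graph_update} with $k=M$:
\begin{align*}
\bm\rho^{(M+1)}(t_{M+1}^+)=\mathcal{A}_M\bm\rho^{(M)}(t_M^+)+\mathcal{B}_M\bm w^{(M+1)}.
\end{align*}
Substituting the inductive hypothesis gives
\begin{align*}
\mathcal{A}_M\Bigl(\prod_{j=0}^{M-1}\mathcal{A}_j\Bigr)\bm\rho^{(0)}(t_0^+)+\sum_{i=1}^{M}\mathcal{A}_M\Gamma_{M,i}\bm w^{(i)}+\mathcal{B}_M\bm w^{(M+1)}.
\end{align*}
By the ordered-product convention, $\mathcal{A}_M\prod_{j=0}^{M-1}\mathcal{A}_j=\prod_{j=0}^{M}\mathcal{A}_j$. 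Likewise, $\mathcal{A}_M\Gamma_{M,i}=\Gamma_{M+1,i}$ for $i\le M$. The last term is $\mathcal{B}_M\bm w^{(M+1)}=\Gamma_{M+1,M+1}\bm w^{(M+1)}$ by the empty-product convention. Combining these three identities gives the formula at stage $M+1$, which completes the induction.

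I expect no substantive obstacle, since the argument uses only linearity and the fact that the $\mathcal{A}_k,\mathcal{B}_k$ are deterministic. The only care needed is the bookkeeping just described: the ordered, left-multiplying product convention and the identity for the empty product.
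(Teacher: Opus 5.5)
Your induction is correct and matches the paper's approach: the paper gives no separate proof of this corollary and presents it as the direct unrolling of the affine recursion in Lemma~\ref{lem:expanding_graph_update}. Your explicit conventions are exactly the bookkeeping the paper leaves implicit: the ordered product $\mathcal{A}_{M-1}\cdots\mathcal{A}_i$ (forced by the rectangular $dm_{l+1}\times dm_l$ dimensions), and the identity for the empty product, which gives $\Gamma_{M,M}=\mathcal{B}_{M-1}$.
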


\section{Numerical Simulation}
\subsection{Simulation Setup}
This section verifies the proposed probabilistic safety design for a sequential release sequence.
A numerical study is conducted by sweeping the release interval $\Delta T$.

The objective is to evaluate how the allowable release-error covariance changes with $\Delta T$.
A safety radius $r_c$ and a failure probability $\beta$ are prescribed.
The corresponding success probability is $1-\beta$.

Two release-condition settings are compared.
In setting (i), the nominal free drift $\|\bm{\mu}_{w,e}\|$ (in Eq.~\eqref{eq:mean_update_edgewise_theta_xi}) is kept constant by fixing the initial relative velocities.
In setting (ii), the initial release velocity is adjusted so that $\|\bm{\mu}_{w,e}\|$ stays constant for any $\Delta T$.

The simulation considers a sequence in the LVLH frame.
A row of three satellites aligned along the $y$-axis is used as a basic unit (width $3$).
These rows are released sequentially in the $+\!x$ direction with a constant interval $\Delta T$.

To represent the evolving interaction structure, the Laplacian at the start of control for the $k$th row is denoted by $L_k$.
It is given by a block tridiagonal form:
\begin{align*}
    L_k =
    \begin{bmatrix}
        L_a & L_b & \bm{0} \\
        L_b & L_a & L_b & & \dots & \bm{0} \\
        \bm{0} & L_b & L_a & L_b \\
        & & L_b & L_a \\
        & \vdots & & & \ddots & L_b \\
        & \bm{0} & & & L_b & L_a
    \end{bmatrix}.
\end{align*}
Here, $L_a$ represents intra-row (horizontal) coupling.
The matrix $L_b$ represents inter-row (vertical) coupling:
\begin{align*}
    L_a &=
    \begin{bmatrix}
        0 & 1 & 0 \\
        1 & 0 & 1 \\
        0 & 1 & 0
    \end{bmatrix},
    \qquad
    L_b =
    \begin{bmatrix}
        1 & 0 & 0 \\
        0 & 1 & 0 \\
        0 & 0 & 1
    \end{bmatrix}.
\end{align*}

The distance between the centers of mass of adjacent satellites within a row is set to $0.25~\mathrm{m}$ along the $y$-axis.
The release speed is assumed to be identical for all satellites.

The closed-loop matrix based on consensus control is defined as
\begin{align*}
    A = \frac{k_A}{k_0}
    \begin{bmatrix}
        1 & 0 \\
        0 & 1
    \end{bmatrix}.
\end{align*}
where, a control gain is defined as $k_A$ and a orbital motion constant is defined as $k_0\coloneqq \frac{2c_+}{\omega_{xy}c_-}$. Detail formulation of this closed-loop matrix is explained in \cite{takahashi2025scalable}.
The effective separation distance is set to $r_c = 1.0~\mathrm{m}$.
The failure probability is set to $\beta = 0.01$ to guarantee a $99\%$ success probability.

The atmospheric density is assumed to be constant.
A low Earth orbit is used as the reference orbit to emulate an ISS deployment scenario.
Tip-off rotation at release is also considered.
The angular rate is estimated by assuming that the line of action of $\Delta V$ is offset by $0.01~\mathrm{m}$ from the satellite center of mass.

All parameters are summarized in Table~\ref{tab:numcase_params}.

\begin{table}[h]
  \centering
  \caption{Numerical case study parameters}
  \label{tab:numcase_params}
  \begin{tabular}{lll}
    \hline
    \textbf{Description} & \textbf{Symbol} & \textbf{Value} \\
    \hline
    Earth GM & $\mu$ & $3.99\times10^{14}\ \mathrm{m^3/s^2}$ \\
    Earth radius & $R_e$ & $6.37\times10^{3}\ \mathrm{km}$ \\
    Altitude & $h$ & $4.00\times10^{2}\ \mathrm{km}$ \\
    Radius & $R_t$ & $R_e+h$ \\
    Inclination & $i_0$ & $51.7^\circ$ \\
    Atmospheric density & $\rho$ & $1.18\times10^{-12}\ \mathrm{kg/m^3}$ \\
    Drag coefficient & $C_d$ & $2.00$ \\
    Area-to-mass & $A/m$ & $1.00\times10^{-2}\ \mathrm{m^2/kg}$ \\
    Mass & $m$ & $1.0\ \mathrm{kg}$ \\
    Size & $\ell$ & $0.10\ \mathrm{m}$ \\
    Inertia (cube) & $I$ & $m\ell^2/6$ \\
    \hline
  \end{tabular}
\end{table}

In this case study, the allowable release-error dispersion is evaluated as a function of $\Delta T$ under the safety constraints.
Two release-condition settings are considered.
In setting (i), the initial relative velocities are fixed as $\dot{\bm{x}}=\dot{\bm{y}}=0.001~\mathrm{m/s}$.
In setting (ii), the initial release velocity is tuned according to the interval $\Delta T$. Specifically, the velocity is adjusted to keep the varying drift component ($C_1 \Delta T$ inEq.~\eqref{eq:w_asym_with_drag_theta_xi_edge}) constant, thereby ensuring that the drift distance induced by the release velocity remains invariant across all intervals. This setup is intended to isolate and verify the influence of release-error dispersion.

The release-error dispersion is defined as a relative deviation from the nominal values of the drift-center components $[2C'_1,\;C'_4]$.
The allowable variance factor is organized using this deviation ratio.
This definition accounts for the fact that the nominal values depend on the drift condition.
Therefore, the error tolerance is reported as the percentage dispersion relative to the nominal values.

\subsection{Deployment Error Analysis}
First, consider case (i), where the initial relative velocity is kept constant.
Figure~\ref{fig:case1_safetyresult} shows the requirements on the release-error magnitude and the release-to-activation interval that satisfy the safety constraint.
The requirements become stricter as the number of clusters increases from $50$ to $100$ and $300$.

To illustrate the safety budget, a representative condition is selected.
The condition is $N=100$ clusters ($3N_\text{cluster}=300$ satellites) with a relative deviation of $5\%$.
Figure~\ref{fig:case1_safetybudget} plots two contributions in Eq. \eqref{eq:cov_update_edgewise_theta_xi} as functions of $\Delta T$.
The first is the edge component of the nominal drift, $\|A_k\bm{\mu}_\rho\|$.
The second is the injected-error mean term, $\|B_k\bm{\mu}_w\|$.
Figure~\ref{fig:case1_safetybudget} indicates that $\bm{\mu}_w$ increases in proportion to the time interval.
This term dominates the safety budget for large $\Delta T$.
This trend is consistent with the $C'_1\Delta T$ term in Eq.~\eqref{eq:w_asym_with_drag_theta_xi_edge}.

Next, case (ii) is examined.
Figure~\ref{fig:case2_safetyresult} and Figure~\ref{fig:case2_safetybudget} show a clear difference from Figure~\ref{fig:case1_safetyresult}.
An optimal interval $\Delta T$ exists that maximizes the allowable variance factor.
This behavior can be explained by two competing effects.
If $\Delta T$ is too small, the effect of initial release errors in the already-released satellites is not sufficiently attenuated.
Conversely, if $\Delta T$ is too large, the nominal drift consumes a larger portion of the safety budget.
This is driven by two factors: the displacement induced by the consensus control becomes larger, and the injected drift magnitude $\|\mathcal{B}_{k}\bm{\mu}_{w}\|$ increases because the reduced angular rate (due to lower release velocity) amplifies the aerodynamic drag coefficient $C_{1,air}$.
As a result, the allowable error decreases as interval $\Delta T$ increases.
Thus, considering the influence of atmospheric drag associated with the initial release angular rate, it is demonstrated that optimal release intervals and velocities exist to maximize the allowable release error dispersion.

\begin{figure}[ht]
    \centering
    \includegraphics[width=1\linewidth]{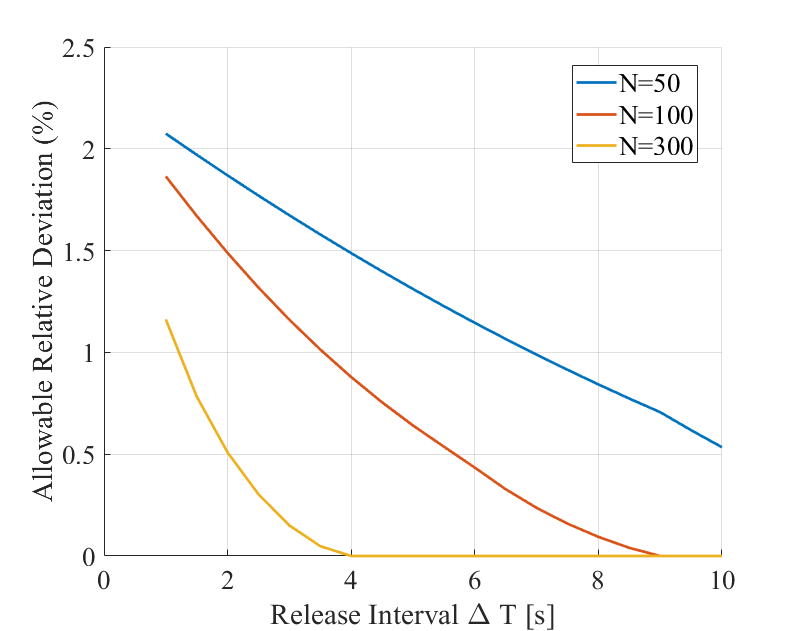}
    \caption{Safe initialization conditions with fixed release velocities in case (i)}
    \label{fig:case1_safetyresult}
\end{figure}

\begin{figure}[ht]
    \centering
    \includegraphics[width=1\linewidth]{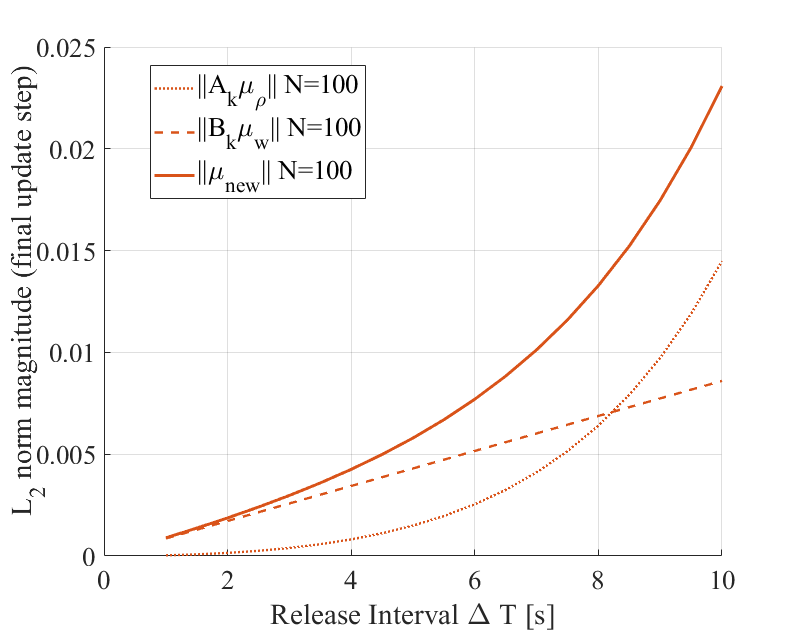}
    \caption{Ratio of control-induced drift to uncontrolled drift in case (i)}
    \label{fig:case1_safetybudget}
\end{figure}

\begin{figure}[ht]
    \centering
    \includegraphics[width=1\linewidth]{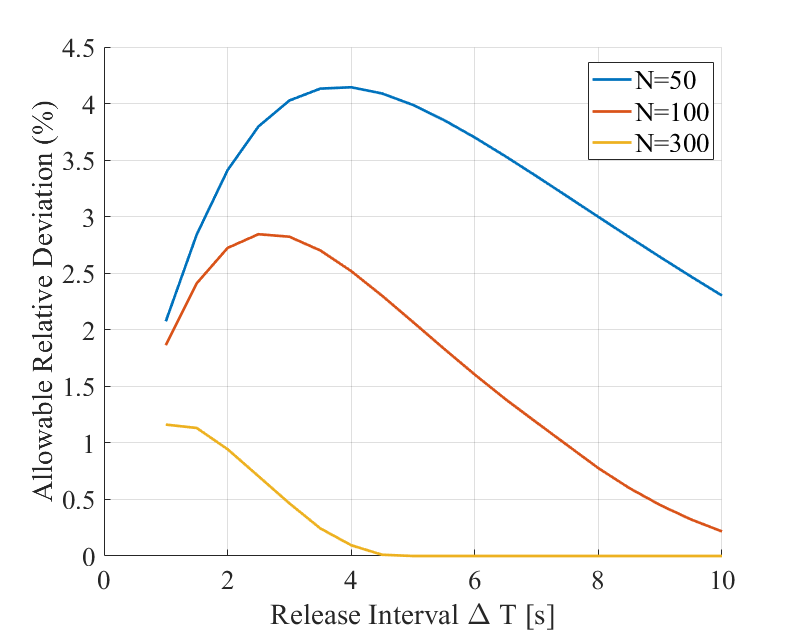}
    \caption{Safe initialization conditions with tuned release velocities in case (ii)}
    \label{fig:case2_safetyresult}
\end{figure}

\begin{figure}[ht]
    \centering
    \includegraphics[width=1\linewidth]{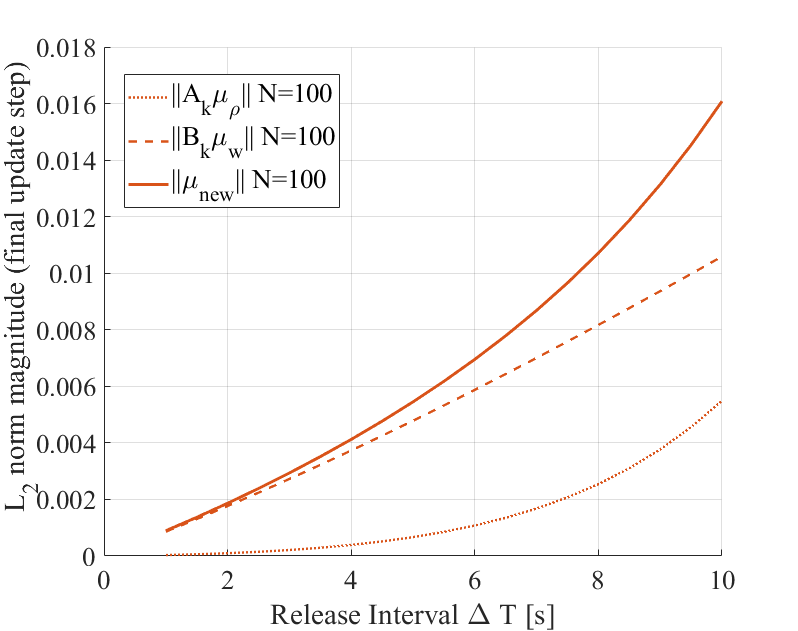}
    \caption{Ratio of control-induced drift to uncontrolled drift in case (ii)}
    \label{fig:case2_safetybudget}
\end{figure}

Finally, a Monte Carlo validation is performed for case (ii).
The condition is $N=100$ clusters with $\Delta T = 4~\mathrm{s}$ and a variance factor of $2.5\%$.
A total of $1000$ trials are simulated.
The result is shown in Figure~\ref{fig:montecalro}.

In Figure~\ref{fig:montecalro}, the largest drift-center distance of the new edges in the worst 100 cases is plotted at each time.
In this validation, the number of trials was limited to 1000 due to computational resource constraints.
While this sample size is not exhaustive enough to strictly verify the theoretical tail distribution, the fact that zero failures were observed across all 1000 trials empirically supports the validity of the proposed method.
This result confirms that the system achieves a level of stability consistent with the design requirement ($1-\beta = 99\%$) with high confidence.
\begin{figure}[h]
    \centering
    \includegraphics[width=1\linewidth]{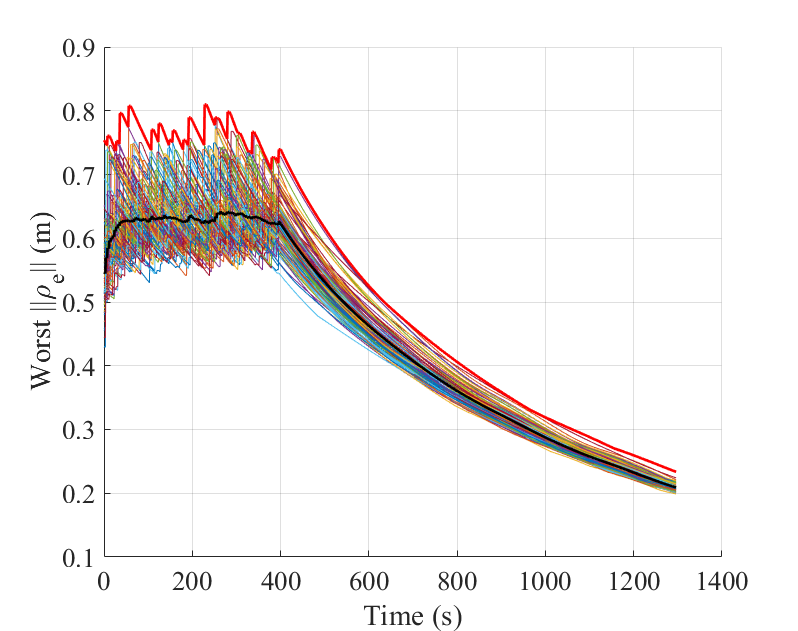}
    \caption{Monte-Calro evaluation of safe initialization condition $r_c = 1.0 ~\mathrm{m}$ (Worst 100 cases: red line = the worst case, black line = the mean value of the 100 cases)}
    \label{fig:montecalro}
\end{figure}
\section{Conclusions and Discussions}
This study proposes probabilistic constraint conditions for release design in swarm initialization.
The target scenario includes a limited control and communication distance after release.
A scalable framework is established to achieve swarm initialization with a prescribed probability.

The proposed method covers both the release phase and the subsequent consensus control phase.
This is a key difference from conventional designs that assume only an uncontrolled drift interval.
The results indicate that an optimal release interval exists.
This interval maximizes the allowable tolerance to hardware-induced release errors that cannot be eliminated.

A remaining issue is that the condition for the interval that maximizes the allowable release error has not been derived.
The optimal interval is expected to depend on the graph structure and the control gain $k_A$.
A detailed analytical derivation will be addressed in future work.

Several extensions are also possible.
One direction is validation under a broader variety of graph structures.
Another direction is to examine whether a minimum number of satellites (clusters) is required for reliable verification of scalability.
By addressing these issues, a more general and widely applicable initialization design can be developed.
\section*{Appendices}
\subsection*{Appendix A. Aerodynamic-drag-induced increments in the drift-center coefficients}
This appendix summarizes how the aerodynamic-drag contribution during the uncontrolled interval is incorporated into the drift-center representation through the increments $C_{1,\mathrm{air}}$ and $C_{4,\mathrm{air}}$.
The starting point is an along-track differential drag forcing term whose magnitude depends on the projected area, which can vary in time when the spacecraft is released with a residual angular rate.

Focusing on the short time window of initial deployment enables an analytical estimate of the aerodynamic-drag contribution. The relative motion is driven by the differential area-to-mass ratio; to isolate this effect, the analysis considers the dynamics with respect to a virtual satellite of zero projected area and then forms relative quantities. Within the LVLH frame, the along-track velocity is assumed dominant and aerodynamic lift is neglected, so the along-track drag forcing is modeled as a scalar input $F_{\mathrm{air}}(t)$ with $u_y(t)=-c_-F_{\mathrm{air}}(t)$. For a cube of side length $a$ spinning about a face-normal principal axis with attitude phase $\phi$ and spin rate $\nu$ at $t=0$, the time-varying along-track projection yields
\begin{align}
F_{\mathrm{air}}(t)=\frac{k_{\mathrm{air}}}{2}\,\frac{a^2}{m}\Bigl(\lvert \cos(\nu t+\phi)\rvert+\lvert \sin(\nu t+\phi)\rvert\Bigr),
\label{eq:Fair_abs_app}
\end{align}
and the absolute values are expanded into a Fourier series and treated as a sum of linear trigonometric forcings. Using the standard expansions of $|\sin x|$ and $|\cos x|$, the forcing comprises a DC component and even harmonics only, and it can be written as
\begin{align}
F_{\mathrm{air}}(t)&=F_{\mathrm{ref}}\left[\frac{\pi}{4}-\frac{\pi}{2}\sum_{m=1}^{\infty}\frac{\cos\!\bigl(4m(\nu t+\phi)\bigr)}{16m^2-1}\right],\\
F_{\mathrm{ref}}&\coloneqq\frac{8}{\pi^2}k_{\mathrm{air}}\frac{a^2}{m},
\label{eq:Fair_fourier_app}
\end{align}
so that
\begin{align}
F_{\mathrm{DC}}&=\frac{\pi}{4}F_{\mathrm{ref}},\quad
&&\hat F_m=\frac{\pi}{2}\frac{F_{\mathrm{ref}}}{16m^2-1},\notag\\
\nu_m&\coloneqq 4m\nu,\quad &&\psi_m\coloneqq 4m\phi.
\label{eq:Fair_coeffs_app}
\end{align}
The averaged-$J_2$ HCW model is linear, so the particular solution is built by superposition of the responses to the DC term and to each even harmonic. Let $\mathcal{L}\{f(t)\}=F(s)$ denote the Laplace transform and consider the in-plane averaged-$J_2$ HCW equations for the particular solution under zero initial conditions,
\begin{align}
\ddot{\bar{x}}-2\omega_{xy}\dot{\bar{y}}-(3\omega_{xy}^2+\alpha)\bar{x}-\beta\,\dot{\bar{y}}&=0,\notag\\
\ddot{\bar{y}}+2\omega_{xy}\dot{\bar{x}}&=-c_-F_{\mathrm{air}}(t),
\end{align}
where $\alpha=\tfrac{4\omega_{xy}^2}{c_-^2/s_{J_2}}$ and $\beta=\tfrac{4\omega_{xy}}{c_-^2/s_{J_2}}$. Taking Laplace transforms with $\bar{X}(s)=\mathcal{L}\{\bar{x}(t)\}$ and $\bar{Y}(s)=\mathcal{L}\{\bar{y}(t)\}$ yields
\begin{align}
\begin{bmatrix}
s^2-(3\omega_{xy}^2+\alpha) & -s(2\omega_{xy}+\beta)\\
2\omega_{xy}s & s^2
\end{bmatrix}
\begin{bmatrix}\bar{X}(s)\\ \bar{Y}(s)\end{bmatrix}
=\notag\\
\begin{bmatrix}0\\ -c_-F_{\mathrm{air}}(s)\end{bmatrix},
\label{eq:laplace_lin_sys_compact_app}
\end{align}
so that eliminating $\bar{X}(s)$ gives transfer functions $\bar{X}(s)=G_x(s)\,(-c_-F_{\mathrm{air}}(s))$ and $\bar{Y}(s)=G_y(s)\,(-c_-F_{\mathrm{air}}(s))$, where $G_x(s)$ and $G_y(s)$ are rational in $s$ with simple poles at $s=\pm j\omega_{xy}$ and a double pole at $s=0$. To express the closed-form responses concisely, introduce the gain $\Gamma$ and its square,
\begin{align}
\Gamma\coloneqq \frac{2c_+}{c_-\,\omega_{xy}},\qquad
\Gamma^2=\left(\frac{2c_+}{c_-\,\omega_{xy}}\right)^2,\qquad
\mathcal{K}_\epsilon\coloneqq \frac{\epsilon_2}{2}\Gamma.
\label{eq:Gamma_defs_app}
\end{align}
For the DC input $F_{\mathrm{air}}(t)=F_{\mathrm{DC}}$ (thus $F_{\mathrm{air}}(s)=F_{\mathrm{DC}}/s$), partial fractions of $G_{x,y}(s)(-c_-)F_{\mathrm{DC}}/s$ yield
\begin{align}
\bar{x}_{\mathrm{DC}}(t)&=\frac{F_{\mathrm{DC}}}{c_-}\,\Gamma\left(t-\frac{\sin(\omega_{xy}t)}{\omega_{xy}}\right),\notag\\
\bar{y}_{\mathrm{DC}}(t)&=\frac{F_{\mathrm{DC}}}{c_-}\left(\Gamma^2\bigl(1-\cos(\omega_{xy}t)\bigr)-\mathcal{K}_\epsilon t^2\right).
\label{eq:dc_response_app}
\end{align}
For a single harmonic component $\cos(\nu t+\psi)$, the closed-form response depends on whether the input frequency coincides with the natural in-plane frequency, so the solution is evaluated by the following branch: for the nonresonant case $\nu\neq\omega_{xy}$, the response contains bounded oscillatory terms and additional non-oscillatory terms; for the resonant case $\nu=\omega_{xy}$, the response contains terms that grow in time and must be treated separately. Using the notation $\omega\coloneqq\omega_{xy}$, $\Delta\coloneqq \omega^2-\nu^2$, $c_\psi\coloneqq\cos\psi$, $s_\psi\coloneqq\sin\psi$, $c_\nu(t)\coloneqq\cos(\nu t)$, $s_\nu(t)\coloneqq\sin(\nu t)$, $c(t)\coloneqq\cos(\omega t)$, and $s(t)\coloneqq\sin(\omega t)$, the nonresonant response for $\nu\neq\omega$ is
\begin{align}
\bar{x}_{\mathrm{harm}}(t)
&=\Gamma\Bigg[c_\psi\!\left(\frac{\omega^2}{\nu\Delta}s_\nu(t)-\frac{\omega}{\Delta}s(t)\right)
\notag\\&\qquad-s_\psi\!\left(\frac{1-c_\nu(t)}{\nu}+\frac{\nu}{\Delta}(c(t)-c_\nu(t))\right)\Bigg],\notag\\
\bar{y}_{\mathrm{harm}}(t)
&=c_\psi\!\left(\Gamma^2\frac{\omega^2}{\Delta}(c_\nu(t)-c(t))-\frac{\epsilon_2\Gamma}{2\nu^2}(1-c_\nu(t))\right)
\notag\\+&s_\psi\!\left(\Gamma^2\frac{\omega}{\Delta}(\nu s(t)-\omega s_\nu(t))+\frac{\epsilon_2\Gamma}{2\nu^2}(\nu t-s_\nu(t))\right),
\label{eq:harm_nonres_app}
\end{align}
whereas for the resonant case $\nu=\omega$ the response is
\begin{align}
\bar{x}_{\mathrm{res}}(t)
&=\Gamma\Bigg[c_\psi\!\left(\frac{s(t)}{2\omega}+\frac{t}{2}c(t)\right)
\notag\\&\qquad-s_\psi\!\left(\frac{1-c(t)}{\omega}-\frac{\omega t}{2}s(t)\right)\Bigg],\notag\\
\bar{y}_{\mathrm{res}}(t)
&=c_\psi\!\left(\Gamma^2\frac{\omega t}{2}s(t)-\frac{\epsilon_2\Gamma}{2\omega^2}(1-c(t))\right)\notag\\
&-s_\psi\!\left(\Gamma^2\frac{1}{2\omega}(s(t)-t c(t))-\frac{\epsilon_2\Gamma}{2\omega^2}(\omega t-s(t))\right).
\label{eq:harm_res_app}
\end{align}
In the present application, the along-track forcing is represented by the discrete set of even harmonics $\nu_m=4m\nu$, and the resonance condition is $\nu_m=\omega_{xy}$ for some integer $m$. When this equality does not hold, the particular solution for each harmonic is nonresonant and can be superposed, and the shift of the in-plane center over the uncontrolled interval is determined solely by the non-oscillatory terms of the nonresonant responses. Accordingly, the drag-induced center shift is obtained by extracting from the superposed particular solution only the terms free of $\sin(\cdot)$ and $\cos(\cdot)$, because only these terms modify the drift-center parameters. Writing the extracted non-circular drift in the transformed coordinates as $\bar{x}_{\mathrm{sp},o}(t)$ and $\bar{y}_{\mathrm{sp},o}(t)$ yields expressions of the form
\begin{align}
\bar{x}_{\mathrm{sp},o}(t)&=\frac{F_{\mathrm{DC}}}{c_-}\,\Gamma\,t + C_{1,\mathrm{air}},\\
\bar{y}_{\mathrm{sp},o}(t)&=\frac{F_{\mathrm{DC}}}{c_-}\,\bigl(\Gamma^2-\mathcal{K}_\epsilon t^2\bigr)+\frac{\epsilon_2}{2}C_{4,\mathrm{air}}-\frac{\epsilon_2}{2}C_{1,\mathrm{air}}t,
\label{eq:drift_compact_app}
\end{align}
and the original in-plane coordinates follow from $x_{\mathrm{sp},o}(t)=\bar{x}_{\mathrm{sp},o}(t)/c_+$ and $y_{\mathrm{sp},o}(t)=\bar{y}_{\mathrm{sp},o}(t)/c_-$. In the virtual-satellite formulation used to construct relative quantities, the DC component is common-mode and cancels; therefore, the center correction that enters the relative drift-center representation is defined by retaining only the nonresonant even-harmonic secular contributions summarized by $(C_{1,\mathrm{air}},C_{4,\mathrm{air}})$ in Eq.~\eqref{eq:C_air_defs_theta_xi_edge}, while the terms proportional to $F_{\mathrm{DC}}$ are excluded from the increment mapping. Denoting by $\bm r_o(t)$ the drift-center motion defined under the averaged-$J_2$ model and by $\bm r'_o(t)$ the corrected drift-center motion including aerodynamic drag, the corrected drift center is written by linear superposition as
\begin{align}
\label{eq:pf_ro_superposition_appA}
\bm r'_o(t)=\bm r_o(t)+\bm r_{\mathrm{air},o}(t),
\end{align}
where $\bm r_{\mathrm{air},o}(t)$ collects the constant term and the coefficient of $t$ induced by the nonresonant even harmonics and is parameterized by $(C_{1,\mathrm{air}},C_{4,\mathrm{air}})$ through the same drift-center structure. Matching the constant term and the drift term proportional to $t$ to the drift-center parametrization by the relative-element pair $(C_1,C_4)$ yields
\begin{align}
\label{eq:pf_ro_prime_appA}
\bm r'_o(t)
&=
\begin{bmatrix}
2C'_1\\[2pt]
C'_4-\epsilon_2 C'_1 t
\end{bmatrix}
\notag\\
&=
\begin{bmatrix}
2\!\left(C_1+\dfrac{1}{2}C_{1,\mathrm{air}}\right)\\[6pt]
C_4+\dfrac{\epsilon_2}{2}C_{4,\mathrm{air}}-\epsilon_2\!\left(C_1+\dfrac{1}{2}C_{1,\mathrm{air}}\right)t
\end{bmatrix},
\end{align}
so the injected mismatch model is extended by replacing $\bm r_{o,i}$ and $\bm r_{o,j}$ with $\bm r'_{o,i}$ and $\bm r'_{o,j}$. If $\nu_m=\omega_{xy}$ holds for some $m$, the corresponding resonant response must be used and the secular growth differs from the nonresonant case; the above mapping applies to the nonresonant regime where $\nu_m\neq\omega_{xy}$ for all contributing harmonics.

The Fourier series in Eq.~\eqref{eq:Fair_fourier_app} converges rapidly and can be truncated with high accuracy; the coefficient magnitude decays as $1/(16m^2-1)$, and a numerical evaluation gives
\begin{align}
\frac{1}{16m^2-1}\Big|_{m=1}^{5}&=\left\{\frac{1}{15},\frac{1}{63},\frac{1}{143},\frac{1}{255},\frac{1}{399}\right\}\notag\\\approx&\{0.0667,\,0.0159,\,0.0070,\,0.0039,\,0.0025\},
\label{eq:fourier_decay_app}
\end{align}
while Eq.~\eqref{eq:C_air_defs_theta_xi_edge} further includes $\nu_m=4m\nu$ and $\nu_m^2$ in the denominators, so the effective decay of the $m$th term is on the order of $1/m^3$ or faster, which justifies truncation at a small integer such as $m=5$ for implementation. Finally, the DC component is common-mode in the virtual-satellite formulation and cancels in relative quantities; therefore, the increments $(C_{1,\mathrm{air}},C_{4,\mathrm{air}})$ are formed from the nonresonant harmonic contributions only, and the DC terms are not included in these increments.

\subsection*{B. Formulation of Recursive Edge Propagation (Proof of Lemma~\ref{lem:expanding_graph_update})}\label{appendixB}
\begin{proof}\noindent
At stage $k+1$, the newly added edge set $\mathcal{F}_{k+1}$ is attached to a subset of nodes (anchor nodes) in the existing cluster.
The displacement of the anchor nodes over the interval $[t_k,t_{k+1})$ is therefore reflected in the relative drift-center coordinates of the new edges.

\medskip
\noindent{(1) Set of attachment nodes and the selection matrix $U_{k+1}$.}
Let $\mathcal U_{k+1}\subseteq \mathcal V_k$ be the set of existing nodes incident to at least one new edge, i.e.,
\begin{align*}
\mathcal U_{k+1}
&\coloneqq
\Bigl\{\,v\in\mathcal V_k \ \big|\ \exists f\in\mathcal F_{k+1}\ \text{s.t.}\ v\in\partial f \Bigr\},\\
u_{k+1}&\coloneqq |\mathcal U_{k+1}|.
\end{align*}
Let $\mathcal I_{k+1}=\{i_1<i_2<\cdots<i_{u_{k+1}}\}\subset\{1,\dots,n_k\}$ denote the ascending list of node indices corresponding to $\mathcal U_{k+1}$, and let $\bm e_i\in\mathbb R^{n_k}$ be the standard basis vector whose $i$-th entry is $1$.
Define the selection matrix
\begin{align}
\label{eq:U_def_rewrite_en}
U_{k+1}
\coloneqq
\begin{bmatrix}
\bm e_{i_1} & \bm e_{i_2} & \cdots & \bm e_{i_{u_{k+1}}}
\end{bmatrix}
\in\{0,1\}^{n_k\times u_{k+1}}.
\end{align}
Then, the stacked drift-center coordinates restricted to the attachment nodes are obtained as
\begin{align}
\label{eq:rU_def_rewrite_en}
\bm r_{o,\mathcal U}^{(k)}(t)
\coloneqq
\bigl(U_{k+1}^\top\otimes I_d\bigr)\bm r_o^{(k)}(t)
\in\mathbb R^{d u_{k+1}}.
\end{align}

\medskip
\noindent{(2) Anchor selection for each new edge and the connectivity matrix $V_{k,\mathcal F_{k+1}}$.}
Introduce a matrix $S_{k+1}\in\{0,1\}^{u_{k+1}\times|\mathcal F_{k+1}|}$ that specifies, for each new edge, which node in $\mathcal U_{k+1}$ serves as its anchor:
\begin{align}
\label{eq:S_def_rewrite_en}
S_{k+1}\in\{0,1\}^{u_{k+1}\times|\mathcal F_{k+1}|}.
\end{align}
Then, the connectivity matrix that maps the full node stack to the anchors of the new edges is expressed as
\begin{align}
\label{eq:V_def_rewrite_en}
V_{k,\mathcal F_{k+1}}
\coloneqq U_{k+1}S_{k+1}
\in\{0,1\}^{n_k\times|\mathcal F_{k+1}|}.
\end{align}

\medskip
\noindent{(3) Contribution of anchor displacement to the new-edge states.}
Define the contribution of the anchor displacement to the new-edge state stack by
\begin{align}
\label{eq:delta_rho_anc_def_rewrite_en_fixed}
&\delta\bm\rho_{\mathrm{anc}}
\coloneqq\notag\\
&\qquad\bigl(V_{k,\mathcal F_{k+1}}^\top\otimes I_d\bigr)
\Bigl(\bm r_o^{(k)}(t_{k+1}^-)-\bm r_o^{(k)}(t_k^+)\Bigr)
\in\mathbb R^{d|\mathcal F_{k+1}|}.
\end{align}

To express the node displacement in terms of the edge state, introduce the projector onto
$\mathrm{range}(E_k)$ by
\begin{align}
\label{eq:Pi_def_rewrite_en_fixed}
\Pi_k \coloneqq E_k L_{e,k}^\dagger E_k^\top \in \mathbb{R}^{n_k\times n_k}.
\end{align}
Using the node-to-edge map Eq.~\eqref{eq:stack_ro} and left-multiplying both sides by $(E_k L_{e,k}^\dagger\otimes I_d)$ yields
\begin{align}
\label{eq:node_recovery_projection_rewrite_en_fixed}
(\Pi_k\otimes I_d)\bm r_o^{(k)}(t)
=
(E_k L_{e,k}^\dagger\otimes I_d)\bm\rho^{(k)}(t).
\end{align}
Applying Eq.~\eqref{eq:node_recovery_projection_rewrite_en_fixed} to the displacement
$\bm r_o^{(k)}(t_{k+1}^-)-\bm r_o^{(k)}(t_k^+)$ gives
\begin{align}
\label{eq:node_disp_in_edge_rewrite_en_fixed}
&(\Pi_k\otimes I_d)\Bigl(\bm r_o^{(k)}(t_{k+1}^-)-\bm r_o^{(k)}(t_k^+)\Bigr)
=\notag\\
&\qquad\quad(E_k L_{e,k}^\dagger\otimes I_d)\Bigl(\bm\rho^{(k)}(t_{k+1}^-)-\bm\rho^{(k)}(t_k^+)\Bigr).
\end{align}

Since the existing-edge state contracts as
$\bm\rho^{(k)}(t_{k+1}^-)=\Phi_k\,\bm\rho^{(k)}(t_k^+)$, we have
\begin{align}
\bm\rho^{(k)}(t_{k+1}^-)-\bm\rho^{(k)}(t_k^+)
=
(\Phi_k-I_{dm_k})\,\bm\rho^{(k)}(t_k^+).
\end{align}
Substituting this into Eq.~\eqref{eq:node_disp_in_edge_rewrite_en_fixed} and then into
Eq.~\eqref{eq:delta_rho_anc_def_rewrite_en_fixed}, we obtain
\begin{align}
\label{eq:delta_rho_anc_proj_rewrite_en_fixed}
\delta\bm\rho_{\mathrm{anc}}
&=
\bigl(V_{k,\mathcal F_{k+1}}^\top\otimes I_d\bigr)
(\Pi_k\otimes I_d)\Bigl(\bm r_o^{(k)}(t_{k+1}^-)-\bm r_o^{(k)}(t_k^+)\Bigr)\nonumber\\
&=
\bigl(V_{k,\mathcal F_{k+1}}^\top E_k L_{e,k}^\dagger\otimes I_d\bigr)
(\Phi_k-I_{dm_k})\,\bm\rho^{(k)}(t_k^+).
\end{align}

\medskip
\noindent{(4) Construction of the new-edge states.}
The stacked relative drift-center coordinates on the new edges are given by the sum of the deployment input and the contribution induced by anchor displacement, shown as Figure~\ref{fig:new_edge_states}
\begin{align}
\label{eq:rho_new_total_rewrite_en}
\bm\rho_{\mathcal F_{k+1}}(t_{k+1}^-)
&=
-\delta\bm\rho_{\mathrm{anc}}+\bm w^{(k+1)}\nonumber\\
&=
R_{k,\mathcal F_{k+1}}
(I_{dm_k}-\Phi_k)\bm\rho^{(k)}(t_k^+)
+\bm w^{(k+1)}.
\end{align}
where, $R_{k,\mathcal F_{k+1}}$ follows the definition of Eq.~\eqref{eq:R_def_main}.
By assumption, the existing-edge states satisfy $\bm\rho^{(k)}(t_{k+1}^-)=\Phi_k\bm\rho^{(k)}(t_k^+)$.
Therefore, stacking the existing-edge and new-edge parts at time $t_{k+1}^+$ yields
\begin{align}
&\begin{bmatrix}
\bm\rho^{(k)}(t_{k+1}^-)\\
\bm\rho_{\mathcal F_{k+1}}(t_{k+1}^-)
\end{bmatrix}
=\notag \\ 
&\qquad\begin{bmatrix}
\Phi_k\\
\mathcal R_{k,\mathcal F_{k+1}}(I_{dm_k}-\Phi_k)
\end{bmatrix}\bm\rho^{(k)}(t_k^+)
+
\begin{bmatrix}
0\\
I_{d|\mathcal F_{k+1}|}
\end{bmatrix}\bm w^{(k+1)}.
\end{align}

\begin{figure}
    \centering
    \includegraphics[width=0.8\linewidth]{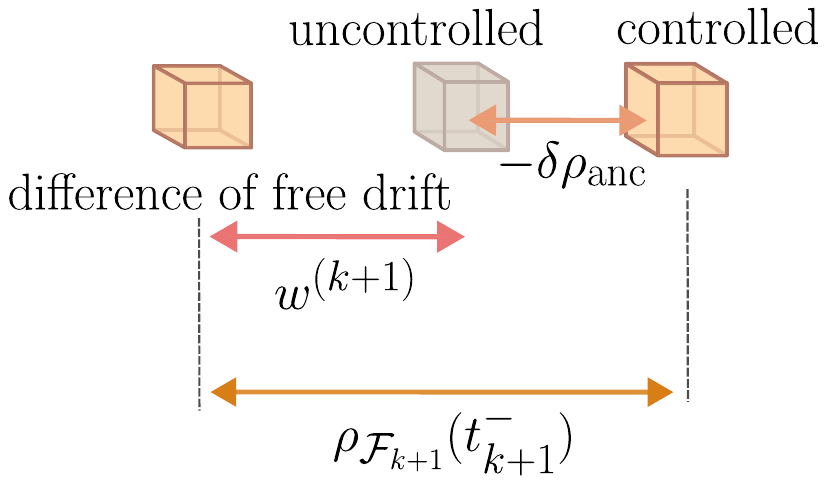}
    \caption{Image of the new-edge states}
    \label{fig:new_edge_states}
\end{figure}

Finally, left-multiplying by the permutation matrix $\mathcal P_{k+1}$ that aligns the edge ordering of $\mathcal G_{k+1}$ completes the derivation of
Eq.~\eqref{eq:rho_update_main}--\eqref{eq:Bk_def_main}.
\end{proof}

\subsection{C. Stage-wise Probaristic Safety Condition(Proof of Theorem~\ref{thm:stagewise_safety_main})}
\begin{proof}\noindent\label{appendixC}
Fix an arbitrary edge $e\in\mathcal{F}_{k+1}$.
Since the edge state satisfies
$\bm{\rho}_e\sim\mathcal{N}(\bm{\mu}_{k+1,e},\bm{\Sigma}_{k+1,e})$, we consider the $(1-\beta)$ confidence ellipsoid
\begin{align}
&\mathcal{E}_{1-\beta}
\coloneqq\notag\\&
\left\{
\bm{x}\in\mathbb{R}^d \,\middle|\,
(\bm{x}-\bm{\mu}_{k+1,e})^\top \bm{\Sigma}_{k+1,e}^{-1} (\bm{x}-\bm{\mu}_{k+1,e})
\le \chi^2_{d,\,1-\beta}
\right\}.
\end{align}
By the definition of the chi-square quantile, it holds that
$\mathbb{P}(\bm{\rho}_e\in\mathcal{E}_{1-\beta})=1-\beta$.

Moreover, 
it can be said that $\mathcal{E}_{1-\beta}$ is contained in the Euclidean ball centered at $\bm{\mu}_{k+1,e}$ with radius
\begin{align}
r_{e}\coloneqq \sqrt{\chi^2_{d,\,1-\beta}\;\lambda_{\max}(\bm{\Sigma}_{k+1,e})},
\end{align}
i.e.,
\begin{align}
\mathcal{E}_{1-\beta}\subseteq \left\{\bm{x}\in\mathbb{R}^d \,\middle|\, \|\bm{x}-\bm{\mu}_{k+1,e}\|_2\le r_e\right\}.
\end{align}
By the triangle inequality,
\begin{align}
\|\bm{x}\|_2 \le \|\bm{\mu}_{k+1,e}\|_2 + \|\bm{x}-\bm{\mu}_{k+1,e}\|_2,
\end{align}

the sufficient condition Eq.~\eqref{eq:sufficient_condition_main} implies
\begin{align}
\mathcal{E}_{1-\beta}\subseteq \left\{\bm{x}\in\mathbb{R}^d \,\middle|\, \|\bm{x}\|_2\le r_c\right\}.
\end{align}
Therefore,
\begin{align}
\mathbb{P}\!\left(\|\bm{\rho}_e\|_2\le r_c\right)
\;\ge\;
\mathbb{P}(\bm{\rho}_e\in\mathcal{E}_{1-\beta})
\;=\;1-\beta,
\end{align}
which is equivalent to $\mathbb{P}\!\left(\|\bm{\rho}_e\|_2>r_c\right)\le \beta$.
Since $e\in\mathcal{F}_{k+1}$ was arbitrary, Eq.~\eqref{eq:sufficient_condition_main} follows.
\end{proof}
\section*{Acknowledgements}
This work was supported by the JAXA Space Strategy Fund through the ”High accuracy satellite formation flight technology” (Grant No. JPJXSSF24MS09003).

The Gemini 2.5 Pro was used for assistance with language
editing, rephrasing, and translation in this manuscript.

\bibliographystyle{IEEEtran}
\bibliography{references.bib}

\thebiography
\begin{biographywithpic}
{Hideki Yoshikado}{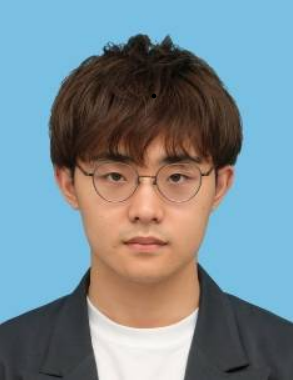} received the B.S. degree in Science and Engineering from the University of Tsukuba in 2024. He is currently pursuing the M.S. degree in the Department of Advanced Energy, Graduate School of Frontier Sciences, at The University of Tokyo, conducting his research at the Sakai Laboratory, JAXA. During his undergraduate studies, he served as the Project Manager for a CubeSat development project and was a visiting student at the University of California, Irvine, in 2023. Since 2024, he has been conducting research on control for electromagnetic formation flying in collaboration with Interstellar Technologies Inc. His research interests include proximity control for distributed space systems.
\end{biographywithpic}

\begin{biographywithpic}
{Yuta Takahashi}{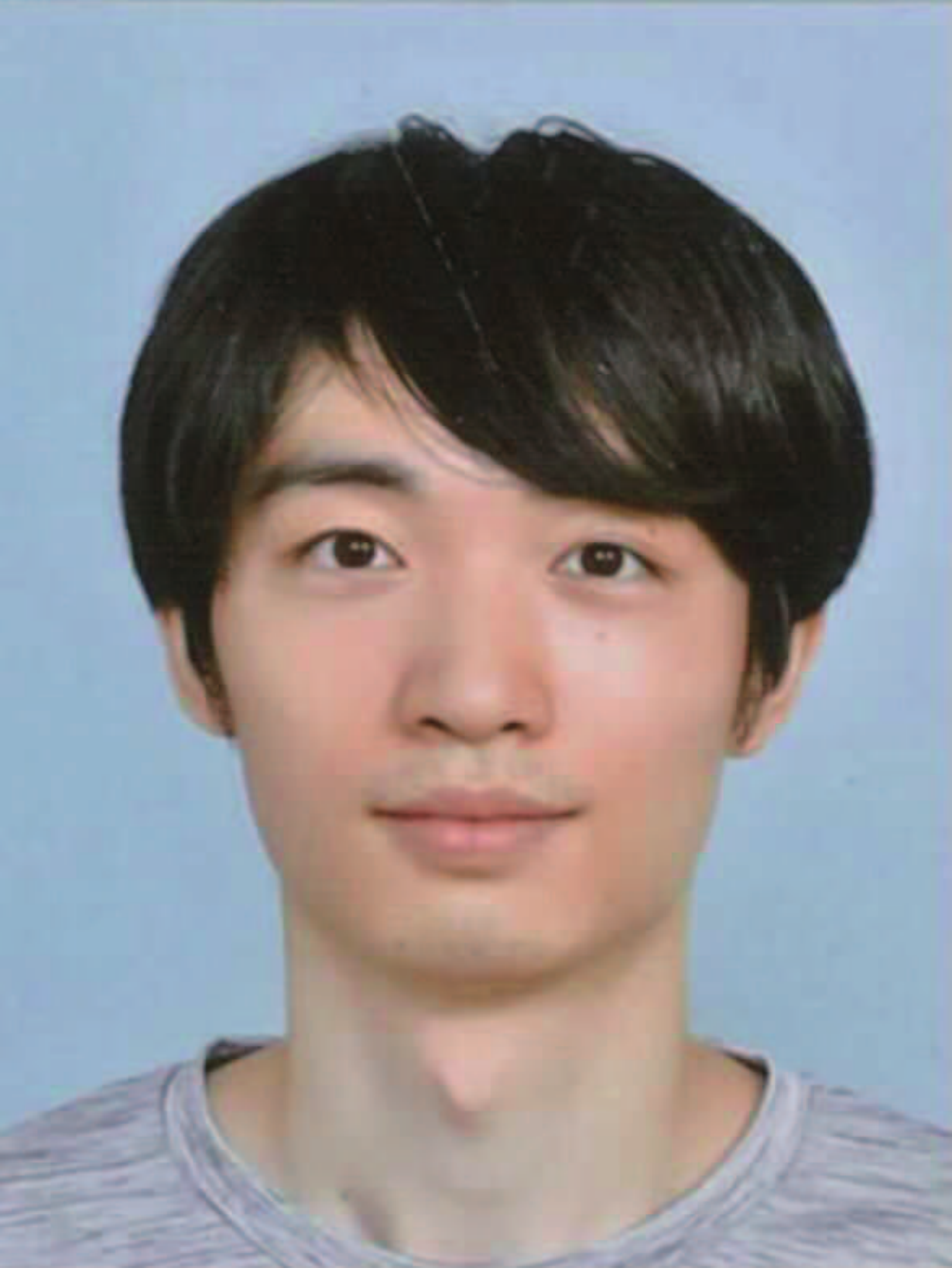} received the B.S. and M.S. degrees (Best Student Awards) in Mechanical Engineering (ME) in 2019 and 2021 (early graduations) from the Tokyo Institute of Technology, Japan. He was a visiting student researcher at the California Institute of Technology in 2022-2023.
He is currently pursuing a Ph.D. degree in ME at the Institute of Science Tokyo, Japan, under the JAXA/ISAS research program. His research fields include multi-agent and learning-based nonlinear control theory, with primary applications in distributed space systems and under-actuated systems. He led the avionics team for the space demonstration mission of the membrane array antenna from 2019 to 2023. Currently, he collaborates with Interstellar Technologies Inc., Japan, on the constellation project, utilizing a distributed space antenna from 2023. 

Mr. Takahashi was the recipient of Finalist award of the Best Student Paper Award of the GNC graduate student paper competition at the AIAA SciTech Forum 2025, the Best Paper Nominee as a Spotlight Speaker of the 2nd Space Robotics Workshop at IEEE SMC-IT/SCC 2025, the First Place Award at the MTT-Sat Challenge by the IEEE MTT-S in 2023, the Miura prize from The Japan Society of Mechanical Engineers in 2021, the Best Group Presentation Award of the AOTULE Conference, which he joined for his B.S. work in 2019, and the Best Student Award (2017, 2018, 2019-2021) from ME in Tokyo Tech.
\end{biographywithpic}


\end{document}